\documentclass{article}

\usepackage{color, xcolor, colortbl}
\usepackage{graphicx}
\usepackage{geometry}
\usepackage{amsmath,amssymb,amsfonts,amsthm}
\usepackage{mathtools}
\usepackage{algorithm}
\usepackage{algorithmic}
\usepackage{dcolumn}
\usepackage{epstopdf}
\usepackage{bm}
\usepackage[caption=false]{subfig}
\usepackage{appendix}
\usepackage{multirow}
\usepackage{braket}
\usepackage[english]{babel}
\usepackage{cite}
\usepackage{aliascnt}
\usepackage{hyperref}
\usepackage[capitalize]{cleveref}
\usepackage{xpatch}
\usepackage{tikz}
\usepackage{adjustbox}
\usepackage{xspace}

\newcommand{\ugamma}{u_{\gamma}}
\newcommand{\Fzerogamma}{F_{0,\gamma}}
\newcommand{\Fonegamma}{F_{1,\gamma}}
\newcommand{\Ftwogamma}{F_{2,\gamma}}

\renewcommand{\Re}{\operatorname{Re}}

\newcommand{\Tr}{\operatorname{Tr}}

\let\originalleft\left
\let\originalright\right
\renewcommand{\left}{\mathopen{}\mathclose\bgroup\originalleft}
\renewcommand{\right}{\aftergroup\egroup\originalright}

\newtheorem{theorem}{Theorem}[section]

\newaliascnt{lemma}{theorem}
\newtheorem{lemma}[lemma]{Lemma}
\aliascntresetthe{lemma}

\crefname{theorem}{Theorem}{Theorems}
\Crefname{theorem}{Theorem}{Theorems}
\crefname{lemma}{Lemma}{Lemmas}
\Crefname{lemma}{Lemma}{Lemmas}

\providecommand{\definitionname}{Definition}
\providecommand{\assumptionname}{Assumption}
\providecommand{\corollaryname}{Corollary}
\providecommand{\lemmaname}{Lemma}
\providecommand{\propositionname}{Proposition}
\providecommand{\remarkname}{Remark}
\providecommand{\theoremname}{Theorem}

\title{Fast-forwarding quantum algorithms for weakly nonlinear dissipative
differential equations and beyond}

\author{
Yixiang Li\thanks{School of Mathematical Sciences, Peking University,
Beijing, China.}
\and
Dong An\thanks{Beijing International Center for Mathematical Research,
Peking University, Beijing, China.
\href{mailto:dongan@pku.edu.cn}{dongan@pku.edu.cn}.}
}

\date{}

\begin{document}

\maketitle

\begin{abstract}
    We study a fast-forwarded quantum algorithm for solving weakly nonlinear dissipative ordinary differential equations. 
    Our approach is a combination of the Carleman embedding technique and the linear combination of Hamiltonian simulation algorithm for linearized systems with fast-forwarded scaling. 
    The complexity of our algorithm does not explicitly depend on the evolution time $T$, thus greatly improving the previous state-of-the-art $\widetilde{\mathcal{O}}(\sqrt{T})$ to $\mathcal{O}(1)$, and any remaining time dependence enters through the output norm and forcing parameters. 
    We rigorously analyze the performance of this approach by convergence guarantees of the Carleman embedding for time-dependent coefficient matrices and detailed complexity estimates, and improve the realization of the Carleman-embedding-based algorithms by simplifying the post-selection step. 
    In addition, we perform a numerical study on differential equations beyond the weakly nonlinear case, and identify possibility of achieving fast-forwarding scaling for systems with stronger nonlinearity or linear non-resonant effect. 
\end{abstract}

\section{Introduction}

The mathematical modeling of complex dynamical phenomena across the physical, biological, and chemical sciences is governed by systems of nonlinear differential equations. 
Solving large-scale systems of nonlinear differential equations remains a significant computational challenge.
As the system dimension or required spatial resolution increases, classical numerical integrators suffer from polynomial or exponential complexity scaling, driving the need for alternative computational paradigms. 

Quantum computing provides a promising computational framework to potentially resolve the issue of high dimensionality, and recent years have witnessed rapid developments in the design of efficient quantum algorithms for nonlinear differential equations~\cite{LiuKoldenKroviEtAl2021,CostaSchleichMoralesBerry2025,WuWangLi2025,JenningsKorzekwaLostaglioEtAl2025,WangJiaVeerapaneniDing2026,JinLiu2024,JinLiu2026HJB,Joseph2020,katzMuraleedharanAlase2025,JenningsKorzekwaLostaglioWang2026,JinLiuMedvidovaYuan2026}. 
Typically, these algorithms first linearize the differential equations, and then apply the quantum linear differential equation algorithms~\cite{Berry2014,BerryChildsOstranderEtAl2017,ChildsLiu2020,Krovi2022,BerryCosta2022,FangLinTong2022,JinLiuYu2024,AnLiuLin2023,AnChildsLin2023,ShangGuoAnZhao2025,LowSomma2025,Li2026}. 
One of the most commonly used linearization techniques is the Carleman embedding. 
Remarkably, quantum algorithms based on Carleman embedding can solve, with provable convergence and efficiency guarantee, certain types of weakly nonlinear equations~\cite{LiuKoldenKroviEtAl2021,CostaSchleichMoralesBerry2025,WuWangLi2025,JenningsKorzekwaLostaglioEtAl2025,WangJiaVeerapaneniDing2026}. 

Despite these profound advances, most of the quantum nonlinear differential equation algorithms scale linearly (or even worse) in the evolution time $T$. 
This means increasing computational complexity when simulating long-time dynamics and restricts the feasibility of long-term asymptotic predictions. 
Furthermore, generic quantum differential equation algorithms remain fundamentally constrained by the ``no-fast-forwarding theorem''~\cite{BerryChildsCleveEtAl2015}, which dictates that worst-case computational complexity must scale at least linearly with $T$. 

To further accelerate the long-time simulation, one possibility is to design sublinear-time-scaling \emph{fast-forwarded} quantum algorithms applicable to only a subset of differential equations of practical interest, thus bypassing the linear-time constraint imposed in the worst case. 
Fast-forwarding has been made possible and well studied in the context of Hamiltonian simulation~\cite{AtiaAharonov2017,GuSommaSahinoglu2021} and solving linear differential equations~\cite{JenningsLostaglioLowrieEtAl2024,YangOnwuntaAn2025,AnLiuWangEtAl2022,HuJin2026}. 
However, there have been fewer works~\cite{JenningsKorzekwaLostaglioEtAl2025,WangJiaVeerapaneniDing2026} on fast-forwarding quantum nonlinear differential equation algorithms, and it remains unclear to what extent such algorithms can be fast-forwarded. 

In this work, we focus on the design and analysis of quantum nonlinear differential equation algorithms with fast-forwarded time complexity. 
We consider the prototype system of quadratic nonlinear ODEs 
\begin{equation}\label{eqn:nonlinear_ODE}
    \frac{du(t)}{dt} = F_2(t) u(t)^{\otimes 2} + F_1(t) u(t) + F_0(t). 
\end{equation}
Here $u = [u_0;u_1;\cdots;u_{n-1}] \in \mathbb{C}^{n}$, $u^{\otimes 2} = u\otimes u \in \mathbb{C}^{n^2}$, and each $F_j(t) \in \mathbb{C}^{n\times n^j}$ is a time-dependent coefficient matrix or vector. 
Quantum algorithms for~\cref{eqn:nonlinear_ODE} aim to prepare a quantum state encoding a normalized $u(T)$ in its amplitudes, which can be measured and classically postprocessed to estimate, for suitably structured observables, useful quantities such as component probabilities, quadratic observables $\bra{u(T)}M\ket{u(T)}$, fidelities with reference solutions, and correlation functions and variances~\cite{BrassardHoyerMoscaEtAl2002,HuangKuengPreskill2020,ElbenFlammiaHuangEtAl2023}.

\subsection{Contribution}
The core contribution of our work is a fast-forwarded quantum algorithm for weakly nonlinear dissipative ODEs. 
Specifically, under the assumption that the linear part of~\cref{eqn:nonlinear_ODE} is dissipative and the non-linear part is relatively weaker than the linear dissipation, we show that~\cref{eqn:nonlinear_ODE} can be solved with complexity explicitly independent of the evolution time $T$. 

Our algorithm follows the standard procedure that we first linearize~\cref{eqn:nonlinear_ODE} by the Carleman embedding~\cite{LiuKoldenKroviEtAl2021,JenningsKorzekwaLostaglioEtAl2025} and solve the linearized system using the linear combination of Hamiltonian simulation (LCHS) algorithm~\cite{YangOnwuntaAn2025}. 
We establish a rigorous convergence analysis of the Carleman embedding in the case of time-dependent coefficient matrices and show that the linearized system is strictly dissipative, thereby enabling the fast-forwarded complexity of LCHS shown in~\cite{YangOnwuntaAn2025}. 
These together result in the elimination of the explicit time dependence in solving the nonlinear ODEs. 

We also improve the post-selection step in the design of the algorithm. 
Carleman embedding is achieved by considering the dynamics of an enlarged solution involving $u(t)$ as well as its higher-order terms $u(t)^{\otimes j}$. 
Previous works usually regard higher-order terms as useless parts and post select the enlarged solution state onto its first component to get $u(t)$. 
In our work, we notice that higher-order terms also contain the information of the solution $u(t)$, and the corresponding Carleman embedding errors remain controlled. 
Therefore, our approach directly discards the ancilla register used for enlarging the space without post selection, and we show that the resulting state still remain a valid numerical solution of the original nonlinear ODE. 

Besides dissipative linear part, Carleman embedding has been proven to converge for other weakly nonlinear systems, such as those with partial conservation~\cite{JenningsKorzekwaLostaglioEtAl2025} or non-resonant linear part~\cite{WuWangLi2025,JenningsKorzekwaLostaglioEtAl2025}. 
In this work, we also numerically explore the possibility of fast-forwarding by examining the Carleman embedding convergence and the eigenvalues of the linearized coefficient matrix. 
Our numerical results indicate that Carleman embedding can still converge with stronger nonlinearity than theoretically predicted, which has already been observed in existing works~\cite{LiuKoldenKroviEtAl2021}. 
Additionally, for dissipative nonlinear ODEs, all of the tested linearized ODEs are still dissipative, which implies that our fast-forwarded algorithm might apply to dissipative ODEs with stronger nonlinearity in practice. 
On the other hand, for conservative and non-resonant systems, although Carleman embedding may also converge in scenarios beyond theoretical prediction, all of our tested linearized ODEs fail to be dissipative, so our fast-forwarded algorithm does not work. 
Nevertheless, we find that the linearized ODEs can be asymptotically dissipative when the coefficient matrix $F_1$ is time-independent and all its eigenvalues have negative real parts. 
In this case, we can alternatively apply the contour-integral-based quantum algorithm for linear ODEs~\cite{TakahiraOhashiSogabeEtAl2020,TakahiraOhashiSogabeEtAl2021,JiangAn2026} to solve the nonlinear ODEs with complexity explicitly independent of the evolution time $T$ as well.

\subsection{Related works}
Several existing works~\cite{ForetsPouly2017,LiuKoldenKroviEtAl2021,AminiZhengSunEtAl2022,WuWangLi2025,JenningsKorzekwaLostaglioEtAl2025} have carefully studied and established the convergence of Carleman embedding with time-independent coefficients. 
Our convergence analysis for the time-dependent case follows the theory established in~\cite{JenningsKorzekwaLostaglioEtAl2025} with a more explicit parameter dependence. 
Additionally, our studies on the conservative and non-resonant systems are motivated from~\cite{JenningsKorzekwaLostaglioEtAl2025} as well as~\cite{WuWangLi2025}, which first establishes the Carleman embedding convergence for the weakly nonlinear non-resonant systems. 

For weakly nonlinear dissipative ODEs, most of the existing quantum algorithms scale linearly or even worse in $T$, and there are two works on fast-forwarding scalings. 
Reference~\cite{JenningsKorzekwaLostaglioEtAl2025} shows that the history state (i.e., a state encoding all intermediate steps along the evolution) can be obtained with cost logarithmically in $T$, but extracting the final solution from the history state can incur an $\mathcal{O}(\sqrt{T})$ scaling due to the post-selection success probability. 
Recently, \cite{WangJiaVeerapaneniDing2026} develops a pivot-shifted Carleman embedding technique and shows that the final solution of the ODE under milder stability condition can be obtained with cost $\mathcal{O}(\sqrt{T})$. 
As a comparison, the complexity of our fast-forwarded quantum algorithm does not explicitly depend on $T$, thereby further eliminating the $\widetilde{\mathcal{O}}(\sqrt{T})$ time dependence for final state preparation. 

As discussed earlier, another improvement of our work is the simplification of the post-selection step. 
Prior Carleman-based quantum algorithms commonly project the enlarged solution $[u;u^{\otimes 2};u^{\otimes 3}; \cdots]$ onto its degree-one sector, which succeeds with probability roughly $(1-\|u\|^2)$ and introduces an additional $\frac{1}{\sqrt{1-\|u\|^2}}$ multiplicative factor with amplitude amplification. 
Our trick directly discards the ancilla rather than post selection, and thus avoids this additional $\frac{1}{\sqrt{1-\|u\|^2}}$ factor in the complexity.

\subsection{Organization}
The rest of this paper is organized as follows. 
Section~\ref{sec:setup} formally defines the setup and stability conditions for the quadratic nonlinear ODEs. 
In \cref{sec:linearization}, we detail the Carleman embedding procedure with time-dependent coefficients and establish rigorous bounds on the truncation error, and then present the design of the fast-forwarded quantum algorithm with a comprehensive theoretical complexity analysis in~\cref{sec:algorithm}. 
We show our numerical results and briefly introduce an alternative fast-forwarding approach in~\cref{sec:numerics}, and conclude with future works in~\cref{sec:concludion}.

\section{Setup}\label{sec:setup}

Consider a system of quadratic nonlinear differential equations in~\cref{eqn:nonlinear_ODE}. 
In this work, we mainly consider the weakly nonlinear dissipative differential equations, which satisfy the following conditions:

\begin{enumerate}
    \item (Dissipation) the coefficient matrix $F_1(t)$ of the linear term has uniformly negative logarithmic norm, i.e., 
    \begin{equation*}
        \mu(F_1) : = \sup_{t\in [0,T]}\max_{\ket{x} = 1}\Re \braket{x|F_1(t)|x} < 0. 
    \end{equation*}
    \item (Weak nonlinearity)
    \begin{equation*}
        R : = \frac{1}{-\mu(F_1)}\left(\|F_2\|\|u(0)\| + \frac{\|F_0\|}{\|u(0)\|}\right) < 1, 
    \end{equation*}
    where $\|F_2\| = \sup_{t \in [0,T]} \|F_2(t)\|$ and  $\|F_0\| = \sup_{t\in [0,T]}\|F_0(t)\| $. \footnote{More generally, in this work the notation $\|\cdot\|$ denotes the matrix/vector $2$ norm when the argument is a fixed matrix/vector, and the supremum of the norms over the time interval $[0,T]$ when the argument is a matrix-/vector-valued function. }
\end{enumerate}

    Following~\cite{JenningsKorzekwaLostaglioEtAl2025}, we can extend our setting to Lyapunov stable ODEs, where the dissipation condition on the logarithmic norm of $F_1(t)$ is replaced by the condition that there exists a time-independent positive-definite Lyapunov matrix \(P>0\) such that $F_1(t)^\dagger P + P F_1(t) \le - 2c P$ for some constant \(c>0\) and all $t \in [0,T]$. 
    In this case, we can use the linear change of variables \(\widetilde{u}(t) = Qu(t)\) with \(Q = P^{1/2}\), and the new variable $\widetilde{u}(t)$ satisfies 
    \begin{equation}\label{eqn:nonlinear_ODE_transformed}
        \frac{d \widetilde{u}(t)}{dt} = \widetilde{F}_0(t) + \widetilde{F}_1(t) \widetilde{u}(t) + \widetilde{F}_2(t) \widetilde{u}(t)^{\otimes 2}, 
    \end{equation}
    where $\widetilde{F}_0(t) = Q F_0(t)$, $\widetilde{F}_1(t) = Q F_1(t) Q^{-1}$, and $\widetilde{F}_2(t) = Q F_2(t) (Q^{-1}\otimes Q^{-1})$. 
    Therefore, all the theoretical results for dissipative equations can be extended to~\cref{eqn:nonlinear_ODE_transformed} as long as $\widetilde{F}_1(t)$ has uniformly negative logarithmic norm and the $R$ parameter associated with~\cref{eqn:nonlinear_ODE_transformed} is bounded by $1$. 
    Our fast-forwarded quantum algorithm to be discussed later also applies if a block-encoding of $Q$ is available.

\section{Carleman embedding with time-dependent coefficients}\label{sec:linearization}

In this section, we will first show how to linearize the nonlinear differential equations with time-dependent coefficients by Carleman embedding, and then prove the convergence result of this procedure. 
The approach and theory are essentially the same as the time-independent coefficient case established in~\cite{LiuKoldenKroviEtAl2021,JenningsKorzekwaLostaglioEtAl2025} but with a more explicit parameter dependence. 
For completeness, here we present a self-contained discussion for the time-dependent coefficient case. 

\subsection{Linearization procedure}\label{sec:Carleman_linearization_procedure}

The first step of the Carleman embedding is to rescale the original ODE. 
Specifically, we introduce a rescaling factor $\gamma$ to rewrite~\cref{eqn:nonlinear_ODE} as 
    \begin{equation*}
        \gamma \frac{du(t)}{dt} = \gamma F_0(t) + F_1(t) \gamma u(t)+ \frac{F_2(t)}{\gamma}  (\gamma u(t)  )^{\otimes 2}.
    \end{equation*}
    Thus,~\cref{eqn:nonlinear_ODE} is equivalent to the ODE system 
    \begin{align*}
    \dot{u}_{\gamma}(t) &= {F}_{0,\gamma}(t)+ {F}_{1,\gamma}(t)u_{\gamma}(t)+ F_{2,\gamma}(t) u_{\gamma}(t)^{\otimes 2},
    \end{align*}
    where
    \begin{align*}
        u_{\gamma}(t) = \gamma u(t), \quad F_{0,\gamma} (t) = \gamma F_0(t), \quad
        F_{1,\gamma}(t) = F_1(t), \quad
        F_{2,\gamma}(t) =  \frac{1}{\gamma}F_2(t). 
    \end{align*}
    The purpose of such rescaling is to impose additional technical conditions for the convergence of Carleman embedding, which will be clear later. 
    We remark that the rescaling technique can also improve the efficiency of quantum nonlinear differential equation algorithms (see~\cite{CostaSchleichMoralesBerry2025} for a careful study). 

Next, consider the infinite-dimensional vector $[\ugamma;\ugamma^{\otimes 2};\ugamma^{\otimes 3}; \cdots]$. 
It is straightforward to verify that this vector satisfies an infinite-dimensional ODE 
\begin{equation}\label{eqn:ODE_inf_dim}
\resizebox{\hsize}{!}{$\frac{d}{dt} \left( \begin{array}{c}
            \ugamma\\
            \ugamma^{\otimes 2} \\
            \ugamma^{\otimes 3} \\
            \vdots \\
            \ugamma^{\otimes N} \\
            \vdots
        \end{array} \right) = 
        \left( \begin{array}{ccccccc}
            A_1^1 & A_2^1 & & & & & \\
            A_1^2 & A_2^2 & A_3^2 & & & &  \\
            & A_2^3 & A_3^3 & A_4^3 & & &\\
            & & \ddots & \ddots & \ddots & &\\
            & & & A_{N-1}^N & A_{N}^{N} & A_{N+1}^N &\\
            & & & & \ddots & \ddots & \ddots 
        \end{array} \right)
        \left( \begin{array}{c}
            \ugamma\\
            \ugamma^{\otimes 2} \\
            \ugamma^{\otimes 3} \\
            \vdots \\
            \ugamma^{\otimes N} \\
            \vdots
        \end{array} \right)
        + 
        \left( \begin{array}{c}
            \Fzerogamma(t)\\
            0 \\
            0 \\
            \vdots \\
            0 \\
            \vdots
        \end{array} \right). $}
\end{equation}
Here $A_{j+1}^j \in \mathbb{C}^{n^j \times n^{j+1}}$, $A_{j}^j \in \mathbb{C}^{n^j \times n^{j}}$, and $A_{j-1}^j \in \mathbb{C}^{n^j \times n^{j-1}}$ are defined as
\begin{align*}
    A_{j+1}^j &= \Ftwogamma(t)\otimes I^{\otimes j-1} + I \otimes \Ftwogamma(t) \otimes I^{\otimes j-2} + \cdots + I^{\otimes j-1} \otimes \Ftwogamma(t), \\
    A_j^j &= \Fonegamma(t)\otimes I^{\otimes j-1} + I \otimes \Fonegamma(t) \otimes I^{\otimes j-2} + \cdots + I^{\otimes j-1} \otimes \Fonegamma(t), \\
    A_{j-1}^j &= \Fzerogamma(t)\otimes I^{\otimes j-1} + I \otimes \Fzerogamma(t) \otimes I^{\otimes j-2} + \cdots + I^{\otimes j-1} \otimes \Fzerogamma(t). 
\end{align*}
Truncating~\cref{eqn:ODE_inf_dim} at a finite order $N$, we obtain a system of linearized ODEs as 
\begin{equation}\label{eqn:ODE_linearized}
    \frac{d \hat{y}}{dt} = A(t) \hat{y} + b(t), 
\end{equation}
where $\hat{y} = [\hat{y}_1;\hat{y}_2;\cdots;\hat{y}_N]$ and
\begin{equation}\label{eq:def_A_b}
     A(t) = 
        \left( \begin{array}{cccccc}
            A_1^1 & A_2^1 & & & & \\
            A_1^2 & A_2^2 & A_3^2 & & & \\
            & A_2^3 & A_3^3 & A_4^3 & &\\
            & & \ddots & \ddots & \ddots &\\
            & & & A_{N-2}^{N-1} & A_{N-1}^{N-1} & A_{N}^{N-1} \\
            & & & & A_{N-1}^N & A_N^N 
        \end{array} \right), \qquad
        b(t) = \left( \begin{array}{c}
            \Fzerogamma(t)\\
            0 \\
            0 \\
            \vdots \\
            0 \\
            0
        \end{array} \right). 
\end{equation}
The initial condition is chosen to be $\hat{y}(0) = [\ugamma(0);\ugamma(0)^{\otimes 2};\cdots;\ugamma(0)^{\otimes N}]$. 
We expect the solution $\hat{y}_j \approx \ugamma^{\otimes j}$ when the truncation order $N$ is sufficiently large, and a technical convergence analysis will be presented next. 

\subsection{Convergence analysis}
We define $\eta \in \mathbb{C}^{\sum^{N}_{j=1} n^j}$ the error vector as 
\begin{equation}\label{eq:def_Carleman_error}
    \eta := [\eta_1; \eta_2; \cdots; \eta_{N}], \qquad \eta_j = \ugamma^{\otimes j} - \hat{y}_j. 
\end{equation}
Then $\eta$ solves the set of linear coupled equations
    \begin{align}
    \frac{d\eta}{dt} = A \eta(t) + \zeta(t), \quad
    \zeta(t) := [0; 0; \dots ;0; A^N_{N+1} \, \ugamma^{\otimes (N+1)}(t)], \quad \eta(0) = [0;\dots;0]. \label{eq:error_evolution}
    \end{align}
The convergence analysis of the Carleman embedding focuses on an upper bound of the norm $\|\eta(t)\|$. 
The idea is to show a fast decay of the error evolution~\cref{eq:error_evolution} under a proper rescaling. 
To this end, we choose the rescaling factor $\gamma$ to be smaller root of the equation 
    \begin{equation}\label{eqn:choice_gamma}
        \frac{\|F_2\|}{\gamma} + \|F_0\|\gamma = -\frac{1+R}{2} \mu(F_1). 
    \end{equation}
    Then we can establish the following two lemmas, whose proofs are given in~\cref{app:Carleman_error_decay}. 

\begin{lemma}\label{lem:matrix_dissipation}
    Suppose the rescaling factor $\gamma$ is chosen to be smaller root of~\cref{eqn:choice_gamma}. 
    Then the coefficient matrix $A(t)$ defined in~\cref{eq:def_A_b} satisfies 
    \begin{equation*}
        A(t) + A(t)^{\dagger} \leq (1-R) \mu(F_1) , \quad \forall~t \in [0,T]. 
    \end{equation*}
\end{lemma}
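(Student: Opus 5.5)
The plan is to bound the quadratic form $\Re\langle x, A(t)x\rangle$ directly, block by block, for an arbitrary vector $x=[x_1;\dots;x_N]$ with $x_j\in\mathbb{C}^{n^j}$. The goal is to show
\[
\Re\langle x,A(t)x\rangle \le \tfrac{(1-R)\mu(F_1)}{2}\sum_{j=1}^N j\|x_j\|^2 .
\]
Since $\mu(F_1)<0$ and $j\ge 1$, the right-hand side is at most $\tfrac{(1-R)\mu(F_1)}{2}\|x\|^2$, which is exactly $A+A^\dagger\le(1-R)\mu(F_1)$.

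\emph{Preliminary: $\gamma$ is well defined and $\gamma^2\le\|F_2\|/\|F_0\|$.} Write $a=\|F_2\|/\gamma$ and $b=\|F_0\|\gamma$. Equation~\cref{eqn:choice_gamma} is a quadratic in $\gamma$. It has real positive roots iff $\big(\tfrac{(1+R)\mu(F_1)}{2}\big)^2\ge 4\|F_0\|\|F_2\|$. By AM--GM and the definition of $R$,
\[
-\mu(F_1)R=\|F_2\|\|u(0)\|+\|F_0\|/\|u(0)\|\ge 2\sqrt{\|F_0\|\|F_2\|},
\]
and $(1+R)/2> R$ because $R<1$, so the condition holds. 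The product of the two roots equals $\|F_2\|/\|F_0\|$. Hence the smaller root satisfies $\gamma^2\le\|F_2\|/\|F_0\|$, i.e.\ $b\le a$.

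\emph{Block estimates.} Each $A_j^j$ is a sum of $j$ terms $I\otimes\cdots\otimes F_1\otimes\cdots\otimes I$, and each term has logarithmic norm at most $\mu(F_1)$. Hence $\Re\langle x_j,A_j^jx_j\rangle\le j\mu(F_1)\|x_j\|^2$. Similarly $\|A_{j+1}^j\|\le ja$ and $\|A_{j-1}^j\|\le jb$. The only off-diagonal couplings are between consecutive levels $j$ and $j+1$, through $A_{j+1}^j$ (row $j$) and $A_j^{j+1}$ (row $j+1$). Their contribution is bounded by
\[
(ja+(j+1)b)\|x_j\|\|x_{j+1}\|\le\tfrac{ja+(j+1)b}{2}\big(\|x_j\|^2+\|x_{j+1}\|^2\big).
\]
Collecting the coefficient of $\|x_j\|^2$ from the two pairs $(j-1,j)$ and $(j,j+1)$ gives at most
\[
j\mu(F_1)+\tfrac{ja+(j+1)b}{2}+\tfrac{(j-1)a+jb}{2}=j\mu(F_1)+j(a+b)+\tfrac{b-a}{2}.
\]
At $j=1$ and $j=N$ one of the pairs is missing, which only removes nonnegative terms, so the same bound holds there.

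\emph{Conclusion.} Use $a+b=-\tfrac{1+R}{2}\mu(F_1)$ from~\cref{eqn:choice_gamma} and $b-a\le0$ from the preliminary step. The coefficient of $\|x_j\|^2$ becomes at most $j\tfrac{1-R}{2}\mu(F_1)$, which yields the claimed bound.

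The main subtlety is the bookkeeping of the asymmetric coupling weights $ja$ versus $(j+1)b$. The naive symmetric split leaves the residual $(b-a)/2$, and it is exactly the choice of the \emph{smaller} root that makes this residual nonpositive. If that failed, a weighted Young inequality would be the fallback.
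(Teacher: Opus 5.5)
Your proposal is correct and is essentially the paper's argument: twice your coefficient of $\|x_j\|^2$ is exactly the Gershgorin row bound $c_j+r_j=2j\mu(F_1)+(2j+1)\|\Fzerogamma\|+(2j-1)\|\Ftwogamma\|$ that the paper obtains for the symmetrized norm majorant $\mathfrak{H}(G)$, and both proofs close with the inequality $\|\Fzerogamma\|\le\|\Ftwogamma\|$ coming from the smaller-root choice. Your product-of-roots argument for that inequality, and your check that a positive root exists, spell out steps the paper uses without stating them.
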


\begin{lemma}\label{lem:solution_norm_bound}
    Suppose the rescaling factor $\gamma$ is chosen to be smaller root of~\cref{eqn:choice_gamma}. 
    Then the norm of the rescaled solution $\ugamma(t)$ satisfies 
    \begin{equation*}
        \|\ugamma(t)\| \leq \|\ugamma(0)\| < \frac{2R}{1+R}. 
    \end{equation*}
\end{lemma}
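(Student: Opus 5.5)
The plan has two independent parts. First, a purely algebraic estimate on the rescaled initial norm $\|\ugamma(0)\| = \gamma\|u(0)\|$ coming from the choice~\cref{eqn:choice_gamma}. Second, an energy/barrier argument showing $\|\ugamma(t)\|$ cannot exceed its initial value.

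Write $a=\|F_2\|$, $b=\|F_0\|$, $m=-\mu(F_1)>0$, and $x=\gamma\|u(0)\|$. Dividing~\cref{eqn:choice_gamma} appropriately, $x$ is the smaller positive root of
\[
g(s) := \frac{a\|u(0)\|}{s} + \frac{b\,s}{\|u(0)\|} = \frac{(1+R)m}{2}.
\]
The function $g$ is convex on $(0,\infty)$, so the set where $g$ is below the level $(1+R)m/2$ is an interval whose left endpoint is $x$. Since $g(1)=Rm<(1+R)m/2$ (because $R<1$), the point $1$ lies in this interval, hence $x<1$.

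Multiplying the defining equation by $x$ gives
\[
x\,\frac{(1+R)m}{2} = a\|u(0)\| + \frac{b x^2}{\|u(0)\|} \le a\|u(0)\| + \frac{b}{\|u(0)\|} = Rm,
\]
so $x\le 2R/(1+R)$. The inequality is strict whenever $b>0$, because $x<1$. In the degenerate case $F_0\equiv 0$ one gets equality, so the strict inequality in the statement should be read with $\|F_0\|>0$, or as non-strict otherwise. I would flag this edge case explicitly. The case $a=0$ is handled the same way, with the root taken as the relevant positive root.

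For the monotonicity, differentiate $\|\ugamma\|^2$ along the rescaled ODE and use the dissipation assumption and the operator-norm bounds $\|F_{2,\gamma}\|=a/\gamma$, $\|F_{0,\gamma}\|=\gamma b$, together with $\|\ugamma^{\otimes 2}\|=\|\ugamma\|^2$:
\[
\frac{d}{dt}\|\ugamma\|^2 \le 2\|\ugamma\|\Big(-m\|\ugamma\| + \gamma b + \tfrac{a}{\gamma}\|\ugamma\|^2\Big) =: 2\|\ugamma\|\, h(\|\ugamma\|).
\]
At $r_0=x=\|\ugamma(0)\|$ we have
\[
\frac{h(r_0)}{r_0} = -m + \frac{b}{\|u(0)\|} + a\|u(0)\| = -(1-R)m<0 .
\]
By continuity, $h(r)<0$ for $r$ in a neighborhood of $r_0$.

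A standard barrier argument then finishes the proof. Suppose $\|\ugamma(t)\|>r_0$ for some $t$, and let $t^*$ be the last time before $t$ at which $\|\ugamma\|=r_0$. Just after $t^*$, the norm lies in the neighborhood where $h<0$, so it is strictly decreasing there. This contradicts $\|\ugamma\|$ exceeding $r_0$ on $(t^*,t]$, hence $\|\ugamma(t)\|\le\|\ugamma(0)\|$ for all $t\in[0,T]$.

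I expect the main subtlety to be the root-location step, namely showing $x<1$ via convexity and keeping track of the strictness. The barrier argument is routine once $h(r_0)<0$ is verified.
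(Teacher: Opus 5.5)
Your argument is correct when $\|F_2\|>0$, but it differs from the paper's in both halves.

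\textbf{Initial bound.} The paper rescales by $\sqrt{\|F_2\|/\|F_0\|}$ and writes $\|u_\gamma(0)\|=\hat x/\hat\Gamma$, where $\hat x+1/\hat x=B_2$ and $\hat\Gamma+1/\hat\Gamma=B_1>B_2$. It takes $\hat\Gamma$ to be the larger root (because $\gamma$ is the smaller one) and bounds $\hat x$ by the larger root. The monotonicity of $B\mapsto(B+\sqrt{B^2-4})/B$ then gives a ratio below $B_2/B_1=2R/(1+R)$.

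Your route is more elementary: convexity of $g$ places $x<1$, and you then multiply the defining equation by $x$ and use $x^2\le 1$. The paper's normalization divides by $\sqrt{\|F_0\|\|F_2\|}$, so it silently excludes $F_0\equiv 0$. Your argument also covers that case, and you correctly find $\|u_\gamma(0)\|=2R/(1+R)$ exactly there. So the strict inequality in the statement genuinely fails when $F_0\equiv 0$. This is harmless, since later proofs only use the non-strict bound together with $2R/(1+R)<1$.

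\textbf{Monotonicity.} The paper derives $\frac{d}{dt}\|u\|\le\|F_2\|\|u\|^2+\mu(F_1)\|u\|+\|F_0\|$ and solves the comparison Riccati equation in closed form. It checks that this solution never exceeds $v(0)$, because $\|u(0)\|$ lies strictly between the two roots of the quadratic, and then invokes an ODE comparison theorem.

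You instead prove forward invariance of the ball of radius $\|u_\gamma(0)\|$ by a barrier argument. It needs only the sign of the right-hand side at that one level, which is exactly the condition $R<1$. This avoids the explicit solution and the comparison theorem. Working with $\|u_\gamma\|^2$ also sidesteps the non-differentiability of $\|u\|$ at $u=0$, which the paper's formula for $\frac{d}{dt}\|u\|$ ignores. The paper's route does give more: the pointwise bound $\|u(t)\|\le v(t)$ along an explicit decreasing profile, which the lemma does not need.

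\textbf{Correction.} Your aside that $\|F_2\|=0$ is handled the same way is wrong. In that case $g$ is linear and increasing, so the unique root is the right endpoint of the sublevel set, not the left. In fact $x=(1+R)/(2R)>1>2R/(1+R)$, so the lemma fails for that $\gamma$. This degenerate (linear) case has to be excluded, as the paper implicitly does.
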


An upper bound of the Carleman error vector $\eta$ can be obtained by applying Duhamel's principle to~\cref{eq:error_evolution},~\cref{lem:matrix_dissipation} and~\cref{lem:solution_norm_bound}. 
We formally state the error estimate in the next lemma and provide its proof in~\cref{app:Carleman_error_decay}.

\begin{lemma}\label{lem:Carleman_error_bound}
    Suppose the rescaling factor $\gamma$ is chosen to be the smaller root of~\cref{eqn:choice_gamma}. 
    Then the Carleman error $\eta$ defined in~\cref{eq:def_Carleman_error} can be bounded as 
    \begin{equation*}
        \|\eta(t)\| \leq \frac{ 2 N R }{1-R} \left(\frac{2R}{1+R}\right)^{N}. 
    \end{equation*}
\end{lemma}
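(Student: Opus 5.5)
We bound $\|\eta(t)\|$ with an energy estimate for the error equation~\cref{eq:error_evolution}, using \cref{lem:matrix_dissipation} for the decay and \cref{lem:solution_norm_bound} for the size of the forcing $\zeta$.

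\emph{Decay from dissipation.} Differentiate $\|\eta(t)\|^2$ along~\cref{eq:error_evolution}:
\begin{equation*}
\frac{d}{dt}\|\eta\|^2 = \braket{\eta|(A+A^\dagger)|\eta} + 2\Re\braket{\eta|\zeta} \le (1-R)\mu(F_1)\|\eta\|^2 + 2\|\eta\|\|\zeta\|.
\end{equation*}
Here we used \cref{lem:matrix_dissipation}. Dividing by $2\|\eta\|$ (or working with $\sqrt{\|\eta\|^2+\epsilon}$ to avoid $\eta=0$) gives
\begin{equation*}
\frac{d}{dt}\|\eta\| \le -\tfrac{(1-R)|\mu(F_1)|}{2}\|\eta\| + \|\zeta\|.
\end{equation*}
Equivalently, Duhamel's principle gives $\|\mathcal{T}e^{\int_s^t A}\| \le e^{-(1-R)|\mu(F_1)|(t-s)/2}$. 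Since $\eta(0)=0$, Gr\"onwall yields
\begin{equation*}
\|\eta(t)\| \le \int_0^t e^{-(1-R)|\mu(F_1)|(t-s)/2}\|\zeta(s)\|\,ds \le \frac{2}{(1-R)|\mu(F_1)|}\sup_s\|\zeta(s)\|.
\end{equation*}
This bound is uniform in $t$, which is the source of the time-independent error estimate.

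\emph{Bounding the forcing.} $A^N_{N+1}$ is a sum of $N$ terms, each of norm $\|F_2(t)\|/\gamma$. Hence
\begin{equation*}
\|\zeta(s)\| \le N\frac{\|F_2\|}{\gamma}\|\ugamma(s)\|^{N+1}.
\end{equation*}
By \cref{lem:solution_norm_bound}, $\|\ugamma(s)\|^{N+1} < \left(\frac{2R}{1+R}\right)^{N+1}$. From the choice~\cref{eqn:choice_gamma}, $\|F_2\|/\gamma \le \frac{1+R}{2}|\mu(F_1)|$, since $\|F_0\|\gamma\ge 0$. Combining these,
\begin{equation*}
\|\eta(t)\| \le \frac{2}{(1-R)|\mu(F_1)|}\cdot N\cdot \frac{1+R}{2}|\mu(F_1)|\cdot \frac{2R}{1+R}\left(\frac{2R}{1+R}\right)^N = \frac{2NR}{1-R}\left(\frac{2R}{1+R}\right)^N.
\end{equation*}
The factors of $|\mu(F_1)|$ and $(1+R)$ cancel, giving exactly the claimed bound.

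\emph{Main obstacle.} The constant bookkeeping is the delicate part. The statement $A+A^\dagger\le(1-R)\mu(F_1)$ gives a decay rate of only $(1-R)|\mu(F_1)|/2$ for $\|\eta\|$, and this factor $2$ must be matched against the $\frac{1+R}{2}$ coming from $\|F_2\|/\gamma$. The cancellation works out above.

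A second point is that we drop $\|F_0\|\gamma$ when bounding $\|F_2\|/\gamma$, which is only an inequality. This is harmless.

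A third point is justifying the differential inequality for $\|\eta\|$ at times where $\eta=0$. The regularization via $\sqrt{\|\eta\|^2+\epsilon}$, or the propagator bound in the Duhamel form, handles this.
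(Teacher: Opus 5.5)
Your proof is correct and follows the paper's argument. The paper uses the same three ingredients. First, Duhamel's principle with the propagator bound $\|\mathcal{T}e^{\int_s^t A(\tau)d\tau}\|\le e^{(1-R)\mu(F_1)(t-s)/2}$ from \cref{lem:matrix_dissipation}. Second, the forcing bound $\|\zeta(s)\|\le N\frac{\|F_2\|}{\gamma}\|\ugamma(s)\|^{N+1}$ combined with \cref{lem:solution_norm_bound}. Third, the estimate $\|F_2\|/\gamma\le\frac{1+R}{2}|\mu(F_1)|$ from \cref{eqn:choice_gamma}. Your energy-estimate phrasing is simply an equivalent way of deriving that same propagator bound.
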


\section{Fast-forwarded quantum algorithm}\label{sec:algorithm}

In this section, we show how to design an efficient quantum algorithm for~\cref{eqn:nonlinear_ODE} with fast-forwarded scaling. 
We first specify the quantum access models to the input data, then describe the key steps of our algorithm, and finally present its complexity analysis. 

\subsection{Input models}\label{sec:input_model}

Throughout, we assume coherent oracle access to the coefficient matrices, i.e., time-dependent versions of the block-encodings $O_{F_0}$,$O_{F_{1}}$,$O_{F_2}$ of $F_0(t),F_1(t)$,$F_2(t)$ as 
\begin{equation*}
    (\bra{0}\otimes I) O_{F_i}(\ket{0}\otimes \ket{t} \otimes  I) = \ket{t}
    \frac{F_i(t)}{\alpha_{F_i}}. 
\end{equation*}
Here $\alpha_{F_i}$'s are unified normalization factors such that $\alpha_{F_i} \geq \sup_t \|F_i\|$, and $\ket{t}$ represents a binary encoding of time $t$ which may vary in the algorithm design. 
Notice that such a time-dependent coherent block-encoding model has been widely used in the literature for simulating unitary or non-unitary dynamics with time-dependent coefficients~\cite{LowWiebe2019,BerryCosta2022,FangLinTong2022,AnChildsLin2023}, and, though generally challenging and application dependent, block-encodings of matrices with certain structure and pattern can be efficiently built~\cite{GilyenSuLowEtAl2019,CampsLinVanBeeumenEtAl2024,SunderhaufCampbellCamps2024}. 

For the initial condition, we assume its state preparation oracle $O_u$ such that 
\begin{equation*}
    O_u \ket{0} = \ket{u(0)}.
\end{equation*}
Similar to the block-encoding model, although constructing $O_u$ may be costly for an arbitrary vector, efficient state-preparation circuits are known for certain structured inputs~\cite{GroverRudolph2002,ZhangLiYuan2022,McArdleGilyenBerta2026,RosenkranzBrunnerMarinSanchezEtAl2025}. 
Later, the complexity of our algorithm is measured primarily in terms of oracle query complexity, i.e., the number of calls to $\mathcal{O}_{F_i}$ and $O_u$.

\subsection{Description of the algorithm}\label{sec:description_algorithm}

\paragraph{(i) \emph{Carleman Embedding}}
We first apply the Carleman embedding with proper rescaling discussed in~\cref{sec:Carleman_linearization_procedure} to linearize the original quadratic nonlinear ODE~\cref{eqn:nonlinear_ODE}. 
This yields an enlarged but linear dissipative system given in~\cref{eqn:ODE_linearized}. 

\paragraph{(ii) \emph{Quantum solution of the linearized ODE}}

Following the approach in~\cite{CostaSchleichMoralesBerry2025,JenningsKorzekwaLostaglioEtAl2025} with all the operations controlled by the additional clock register, we construct the time-dependent block-encoding of $A(t)$ as 
\begin{equation*}
    (\bra{0}\otimes I) U_{A}(\ket{0}\otimes \ket{t} \otimes  I) = \ket{t}
    \frac{A(t)}{\alpha_{A}}, 
\end{equation*}
using $\mathcal{O}(1)$ queries to $O_{F_0},O_{F_1},O_{F_2}$. 
The block-encoding factor $\alpha_A$ scales as $\mathcal{O}(N (\gamma \alpha_{F_0} + \alpha_{F_1} + \alpha_{F_2}/\gamma) )$. 
Additionally, following the approach in~\cite{LiuKoldenKroviEtAl2021,CostaSchleichMoralesBerry2025,JenningsKorzekwaLostaglioEtAl2025}, we prepare the initial state $\hat{y}(0)$ using $\mathcal{O}(N)$ queries to $O_u$. 

Then, with these two input models, we apply the fast-forwarded quantum linear ODE algorithm established in~\cite{YangOnwuntaAn2025} to solve the linearized ODE~\cref{eqn:ODE_linearized}. 
The key idea of this algorithm is based on a truncated Duhamel's principle as 
\begin{equation*}
        \hat{y}(T) = \mathcal{T} e^{\int_0^T A(t) dt} \hat{y}(0) + \int_0^T  \mathcal{T} e^{\int_t^T A(s) ds} b(t) dt \approx \int_{T-T_0}^T  \mathcal{T} e^{\int_t^T A(s) ds} b(t) dt. 
\end{equation*}
Thanks to the dissipation of $A(t)$ shown in~\cref{lem:matrix_dissipation}, for sufficiently long time $T$, ignoring the homogeneous term does not notably sacrifice the accuracy, and the time truncation parameter $T_0$ can be independent of $T$ (but depends on the strength of the dissipation as well as the target accuracy). 
Then the algorithm implements each $\mathcal{T} e^{\int_t^T A(s) ds}$ with effective simulation time at most $T_0$ by the linear combination of Hamiltonian simulation approach~\cite{AnChildsLin2023,LowSomma2025}, and then computes the integral by implementing its suitable discretization via the linear combination of unitaries technique~\cite{ChildsWiebe2012}. 
The output of the algorithm is a quantum state $\ket{\hat{y}_{\text{num}}}$, which is an approximation of the normalized final solution $\ket{\hat{y}(T)}$ of the enlarged linear ODE.

\paragraph{(iii) \emph{Post-processing}}

The final step is to extract $\ket{u(T)} = \ket{\ugamma(T)}$ from the enlarged solution of the linearized ODE, which can be realized by simply discarding proper ancilla qubits. 
To understand this, notice that if we ignore the Carleman error, then the exact enlarged state is a tensor product state as 
\begin{equation*}
    \ket{\hat{v}} = \frac{1}{C} [ \ugamma(T); \ugamma(T)^{\otimes 2}; \cdots; \ugamma(T)^{\otimes N} ]  = \ket{\widetilde{v}} \otimes \ket{\ugamma(T)}, 
\end{equation*}
where 
\begin{equation*}
    \ket{\widetilde{v}} = \frac{\|\ugamma(T)\|}{C} [\mathbf{1};  \ugamma(T); \ugamma(T)^{\otimes 2}; \cdots; \ugamma(T)^{\otimes N-1}]. 
\end{equation*}
Therefore, directly discarding all the qubits of $\ket{\widetilde{v}}$ in $\ket{\hat{v}}$ yields the normalized solution $\ket{\ugamma(T)}$ of the original nonlinear ODE. 

In the algorithm, we may also directly discard all the corresponding qubits in $\ket{\hat{y}_{\text{num}}}$. 
The output state is a density matrix $ \rho_{\text{num}} = \Tr_1 ( \ket{\hat{y}_{\text{num}}} \bra{\hat{y}_{\text{num}}}) \approx \Tr_1 ( \ket{ \hat{v} } \bra{ \hat{v} } ) = \ket{\ugamma(T)} \bra{\ugamma(T)} $, where $\Tr_1$ denotes the partial trace operator.

\subsection{Complexity analysis}

The complexity of our algorithm is summarized in the following theorem. 

\begin{theorem}\label{thm:complexity}
    Consider the initial value problem of the nonlinear ODEs in~\cref{eqn:nonlinear_ODE} satisfying the dissipative condition $\mu(F_1) < 0$ and the weakly nonlinear condition $R = \frac{1}{-\mu(F_1)}(\|F_2\|\|u(0)\| + \frac{\|F_0\|}{\|u(0)\|}) < 1$, as discussed in~\cref{sec:setup}. 
    Suppose that we are given access to the time-dependent block-encodings $O_{F_i}$ of the coefficients $F_i(t)$ for $i = 0,1,2$ with block-encoding factor $\alpha_i$ and the state preparation oracle $O_u$ to the initial condition $\ket{u(0)}$ as shown in~\cref{sec:input_model}. 
    Then, for sufficiently large $T>0$ and small $\epsilon > 0$, 
    there exists a quantum algorithm that produces an approximation of $\ket{u(T)}\bra{u(T)}$ with trace distance at most $\epsilon$, using 
    \begin{equation*}
           \widetilde{\mathcal{O}}\left(  \frac{\|u(0)\| + \|F_0\|_{L_1(T-T_0,T)} }{ \|u(T)\|}   \frac{  \gamma \alpha_{F_0} + \alpha_{F_1} + \alpha_{F_2}/\gamma  }{ (1-R)^{5/2} |\mu(F_1)| } \log^{4+o(1)}\left(\frac{1}{\epsilon}\right)\right)
    \end{equation*}
    queries to $O_{F_i}$'s, and 
    \begin{equation*}
         \mathcal{O}\left(\frac{1}{(1-R)^{3/2}} \frac{\|u(0)\| + \|F_0\|_{L_1(T-T_0,T)} }{ \|u(T)\|} \log\left( \frac{|\mu(F_1)|}{\|F_2\|\|u(T)\|\epsilon } \right) \right)
    \end{equation*}
    queries to $O_u$. 
    Here $\|F_0\|_{L_1(T-T_0,T)} = \int_{T-T_0}^T \|F_0(t)\| dt$ where the parameter $T_0 = \widetilde{\mathcal{O}} \left( \frac{1}{(1-R)|\mu(F_1)|} \log\left( \frac{\|F_0\|}{\|u(T)\|\epsilon } \right) \right)$. 
\end{theorem}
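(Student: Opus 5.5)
We will prove the theorem by splitting the total error into three parts and bounding each one separately. The first part is the Carleman truncation error. The second is the error of the fast-forwarded LCHS solver of~\cite{YangOnwuntaAn2025} applied to~\cref{eqn:ODE_linearized}. The third is the error introduced by the partial-trace post-processing. We allot each part a budget of $\epsilon/3$.

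\textbf{Post-processing and truncation order.} For unit vectors $\ket{a},\ket{b}$, the trace distance between $\ket{a}\bra{a}$ and $\ket{b}\bra{b}$ is at most $\|\ket{a}-\ket{b}\|$, and the partial trace $\Tr_1$ is contractive in trace distance. It therefore suffices to bound $\|\ket{\hat{y}_{\text{num}}}-\ket{\hat{v}}\|$, where $\ket{\hat v}=\ket{\widetilde v}\otimes\ket{\ugamma(T)}$. Write $v=[\ugamma(T);\cdots;\ugamma(T)^{\otimes N}]$ for the exact enlarged vector. By the elementary bound $\|x/\|x\|-y/\|y\|\|\le 2\|x-y\|/\|y\|$ we get
\begin{equation*}
\|\ket{\hat y(T)}-\ket{\hat v}\|\le 2\|\eta(T)\|/\|v\|\le 2\|\eta(T)\|/\|\ugamma(T)\|.
\end{equation*}
\Cref{lem:Carleman_error_bound} bounds $\|\eta\|$ by $\frac{2NR}{1-R}(\frac{2R}{1+R})^N$. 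Note that $\log\frac{1+R}{2R}\gtrsim 1-R$ and $\gamma\sim \|F_2\|/|\mu(F_1)|$ up to $R$-dependent constants. Hence it suffices to take
\begin{equation*}
N=\mathcal{O}\!\left(\frac{1}{1-R}\log\frac{|\mu(F_1)|}{\|F_2\|\|u(T)\|\epsilon}\right).
\end{equation*}
This choice of $N$ explains the logarithmic factor in the $O_u$ count.

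\textbf{Linear solver.} By \cref{lem:matrix_dissipation}, the matrix $A(t)$ has logarithmic norm at most $-(1-R)|\mu(F_1)|/2$. We then invoke the complexity theorem of~\cite{YangOnwuntaAn2025} for strictly dissipative inhomogeneous ODEs. That theorem truncates the Duhamel integral to $[T-T_0,T]$, with $T_0=\widetilde{\mathcal{O}}\big(\frac{1}{(1-R)|\mu(F_1)|}\log\frac{\|b\|}{\|\hat y(T)\|\epsilon}\big)$, which gives the stated $T_0$. Its query complexity is roughly
\begin{equation*}
\frac{\|\hat y(0)\|+\|b\|_{L_1}}{\|\hat y(T)\|}\cdot\frac{\alpha_A}{|\mu_A|}\,\mathrm{polylog}(1/\epsilon)
\end{equation*}
queries to $U_A$, together with $\frac{\|\hat y(0)\|+\|b\|_{L_1}}{\|\hat y(T)\|}$ calls to the initial-state preparation. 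We substitute the following quantities:
\begin{itemize}
\item $\alpha_A=\mathcal{O}(N(\gamma\alpha_{F_0}+\alpha_{F_1}+\alpha_{F_2}/\gamma))$, with each $U_A$ costing $\mathcal{O}(1)$ oracle queries;
\item $|\mu_A|\gtrsim(1-R)|\mu(F_1)|$;
\item $\|b\|_{L_1}=\gamma\|F_0\|_{L_1(T-T_0,T)}$;
\item $\|\hat y(0)\|\le \gamma\|u(0)\|/\sqrt{1-\|\ugamma(0)\|^2}$, using the geometric series and \cref{lem:solution_norm_bound};
\item $\|\hat y(T)\|\ge \gamma\|u(T)\|-\|\eta\|\gtrsim\gamma\|u(T)\|$.
\end{itemize}
The factors of $\gamma$ cancel. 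The bound $1-\|\ugamma(0)\|\ge (1-R)/(1+R)$ contributes a factor $(1-R)^{-1/2}$. Combining these, the $U_A$ count picks up $N\cdot(1-R)^{-1}\cdot(1-R)^{-1/2}$, i.e.\ $(1-R)^{-5/2}$ up to logarithmic factors. For the $O_u$ count, preparing $\hat y(0)$ costs $\mathcal{O}(N)$ calls to $O_u$, which yields $N(1-R)^{-1/2}$, i.e.\ $(1-R)^{-3/2}$ times the logarithm.

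\textbf{Main obstacle.} The hardest step will be the careful parameter bookkeeping. We must check that the cited LCHS result applies to time-dependent $A(t)$ with only a uniform logarithmic-norm bound. We must also verify that its $\log^{4+o(1)}(1/\epsilon)$ overhead and $T_0$ dependence survive with $\epsilon$ replaced by $\epsilon\|\hat y(T)\|$-type relative accuracies. The norm ratio $\|\hat y(0)\|/\|\hat y(T)\|$ must be tracked uniformly in $T$; this is the only channel through which $T$ enters, via $\|u(T)\|$ and $\|F_0\|_{L_1}$. If the lower bound on $\|\hat y(T)\|$ fails because $\|\eta\|$ is not small relative to $\gamma\|u(T)\|$, I would absorb it by enlarging $N$, which costs only a further logarithmic factor.
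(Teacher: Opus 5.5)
Your proposal is correct and follows the same route as the paper. You normalize the Carleman error via $2\|\eta(T)\|/\|v(T)\|$ with $\|v(T)\|\ge\gamma\|u(T)\|$, pass through the partial trace by contractivity, and invoke the fast-forwarded LCHS bound of~\cite{YangOnwuntaAn2025} with decay rate $(1-R)|\mu(F_1)|/2$ from \cref{lem:matrix_dissipation}; after the $\gamma$'s cancel, the ratio $(\|\hat y(0)\|+\gamma\|F_0\|_{L_1})/\|\hat y(T)\|$ is bounded by $(1-R)^{-1/2}(\|u(0)\|+\|F_0\|_{L_1})/\|u(T)\|$. Two small bookkeeping points: the partial trace of the exact state $\ket{\hat v}$ is exactly $\ket{\ugamma(T)}\bra{\ugamma(T)}$, so your third error budget is never used; and the cited LCHS bound carries $\log^{3+o(1)}(1/\epsilon)$, with the fourth logarithm in the statement coming from the factor $N$ in $\alpha_A$, so attributing $\log^{4+o(1)}$ to the cited result itself, as in your last paragraph, would over-count by one.
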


\begin{proof}

The proof consists of two parts. 
First, we will specify the tolerated errors as well as parameter choices in each numerical approximation step to make sure that the overall error can be bounded by $\epsilon$. 
Then, we will estimate the overall complexity in the quantum implementation step. 

Let $v(t) = [ \ugamma(t); \ugamma(t)^{\otimes 2}; \cdots; \ugamma(t)^{\otimes N} ]$ and $\ket{v(t)} = v(t)/\|v(t)\|$. 
According to~\cref{lem:Carleman_error_bound}, we can bound the Carleman embedding error as 
\begin{equation*}
    \|\hat{y}(T) - v(T) \| \leq \frac{ 2 N R }{1-R} \left(\frac{2R}{1+R}\right)^{N}. 
\end{equation*}
Notice that 
\begin{equation*}
    \|v(T)\| = \sqrt{ \frac{\|\ugamma (T)\|^2-\|\ugamma (T)\|^{2N+2}}{1-\|\ugamma (T)\|^2} } \ge \|\ugamma(T)\| \geq \frac{2\|F_2\| \|u(T)\| }{ (1+R)|\mu(F_1)| } , 
\end{equation*}
where the last inequality comes from the definition of $\gamma$ in~\cref{eqn:choice_gamma} that $\gamma = \frac{2}{-(1+R)\mu(F_1)} (\|F_2\| + \|F_0\|\gamma^2 ) \geq \frac{2\|F_2\|}{ -(1+R)\mu(F_1) } $. 
Then we can bound the Carleman embedding error in the normalized states as
\begin{align}
    \left\| \ket{\hat{y}(T)} - \ket{v(T)} \right\| & \leq \frac{ 2 \|\hat{y}(T) - v(T)\| }{\|v(T)\|} \nonumber \\
    & \leq 2 N \frac{  R (1+R)  }{1-R} \frac{|\mu(F_1)|}{\|F_2\| \|u(T)\|} \left(\frac{2R}{1+R}\right)^{N} \eqqcolon \epsilon_{\text{carl}}. \label{eq:proof_main_thm_Carleman_error}
\end{align}
After the Carleman embedding, we solve the linear ODE~\cref{eqn:ODE_linearized} by the fast-forwarded quantum linear combination of Hamiltonian simulation algorithm in~\cite{YangOnwuntaAn2025}. 
Let $\ket{\hat{y}_{num}}$ denote the numerical output quantum state of the algorithm such that $\| \ket{\hat{y}_{num}} - \ket{ \hat{y} (T)} \| \leq \epsilon_{\text{num}}$. 
Then, together with~\cref{eq:proof_main_thm_Carleman_error}, we have 
\begin{equation*}
    \| \ket{\hat{y}_{num}} - \ket{ v(T) } \| \leq \epsilon_{\text{carl}} + \epsilon_{\text{num}}. 
\end{equation*}
Let $\rho$ be the final output after discarding the corresponding qubits, i.e., $\rho = \Tr_1 ( \ket{\hat{y}_{num}} \bra{\hat{y}_{num}} )$. 
Then, noticing that $\ket{\ugamma(T)}\bra{\ugamma(T)} = \Tr_1 ( \ket{v(T)} \bra{v(T)} )$ according to the discussion in~\cref{sec:description_algorithm} and using the contractivity of the trace distance (denoted by $D(\cdot,\cdot)$) under the partial trace, 
we can bound the final trace distance error as 
\begin{align*}
    D(\rho,\ket{\ugamma(T)}\bra{\ugamma(T)}) &= D \left( \Tr_1 ( \ket{\hat{y}_{num}} \bra{\hat{y}_{num}} ), \Tr_1 ( \ket{v(T)} \bra{v(T)} ) \right) \\
    & \leq D\left( \ket{\hat{y}_{num}} \bra{\hat{y}_{num}}, \ket{v(T)} \bra{v(T)} \right)\\
    &\leq \left\| \ket{\hat{y}_{num}} - \ket{v(T)}\right\| \leq \epsilon_{\text{carl}} + \epsilon_{\text{num}}. 
\end{align*}
In order to bound the final error by $\epsilon$, it suffices to choose $ \epsilon_{\text{carl}} = \epsilon_{\text{num}} = \epsilon/2 = \mathcal{O}(\epsilon)$, and the corresponding choice of the Carleman truncation order $N$ can be obtained from~\cref{eq:proof_main_thm_Carleman_error} as 
\begin{align}
    N = \mathcal{O}\left( \frac{1}{\log\left( \frac{1+R}{2R} \right)} \log\left( \frac{  R (1+R) |\mu(F_1)| }{(1-R)\|F_2\| \|u(T)\| \epsilon }  \right) \right) =  \widetilde{\mathcal{O}} \left( \frac{1}{1-R} \log\left( \frac{ |\mu(F_1)| }{\|F_2\| \|u(T)\| \epsilon }  \right) \right). \label{eqn:proof_complexity_eq2}
\end{align}

We now estimate the query complexity, which is only from solving the linearized ODE. 
We first focus on the number of queries to the matrix input models. 
According to~\cite[Appendix D.1]{YangOnwuntaAn2025} and~\cref{lem:matrix_dissipation}, as long as $T > \Omega(\frac{1}{\eta} \log(\frac{\|\hat{y}(0)\|  }{ \|\hat{y}(T)\| \epsilon}))$, the query complexity to $U_A$ (which is also the query complexity to $O_{F_i}$'s because $U_A$ can be built by one query to each of $O_{F_i}$ according to  \cite[Appendix~G]{CostaSchleichMoralesBerry2025}) is 
\begin{equation}\label{eqn:proof_complexity_eq3}
    \widetilde{\mathcal{O}}\left(\frac{\|\hat{y}(0)\| + \|\Fzerogamma\|_{L_1(T-T_0,T) }}{\|\hat{y}(T)\|}\frac{\alpha_A}{\eta} \log^{3+o(1)}\left(\frac{1}{\epsilon}\right)\right), 
\end{equation}
where $\eta = \frac{(1-R)|\mu(F_1)|}{2}$, $T_0 = \mathcal{O}( \frac{1}{\eta} \log(\frac{ \|F_{0,\gamma}\| }{ \eta \|\hat{y}(T)\|\epsilon }) )$, and $\alpha_A = \mathcal{O} (N (\gamma \alpha_{F_0} + \alpha_{F_1} + \alpha_{F_2}/\gamma ) ) $.

By the error analysis of Carleman linearization, we have $\displaystyle \|\hat{y}(T)-v(T)\|\le \epsilon \|v(T)\|/2$ for a small $\epsilon$. 
Thus, we have $\displaystyle \|\hat{y}(T)\| \ge (1-\epsilon)\|v(T)\|\ge \frac{1}{2}\|v(T)\|$, and we can estimate $\displaystyle \frac{\|\hat{y}(0)\| + \| \Fzerogamma\|_{L_1(T-T_0,T)}}{\|\hat{y}(T)\|}$ as
\begin{align*}
    \frac{\|\hat{y}(0)\| + \|\Fzerogamma\|_{L_1(T-T_0,T)}}{\|\hat{y}(T)\|} 
    &\le 2\frac{\|v(0)\| +\|{\Fzerogamma}\|_{L_1(T-T_0,T)}}{\|v(T)\|}  \nonumber \\
    & = 2\frac{\gamma\|u(0)\| \sqrt{ \sum_{j=0}^{N-1} (\gamma\|u(0)\|)^{2j} } + \gamma \|F_0\|_{L_1(T-T_0,T)} }{\gamma\|u(T)\| \sqrt{ \sum_{j=0}^{N-1} (\gamma\|u(T)\|)^{2j} }} \nonumber \\
    & \le 2 \sqrt{ \sum_{j=0}^{N-1} (\gamma\|u(0)\|)^{2j} } \frac{\gamma\|u(0)\| + \gamma \|F_0\|_{L_1(T-T_0,T)} }{\gamma\|u(T)\| } \nonumber \\
    &\le 2 \sqrt{ \frac{1}{1-\gamma\|u(0)\|} } \frac{\|u(0)\| + \|F_0\|_{L_1(T-T_0,T)} }{ \|u(T)\|}. 
\end{align*}
From~\cref{lem:solution_norm_bound}, we have $\displaystyle  \sqrt{ \frac{1}{1-\gamma\|u(0)\|} } \leq  \sqrt{ \frac{1}{1-\frac{2R}{1+R} } } \leq \sqrt{ \frac{2}{1-R} }$, so we can further bound the above expression by 
\begin{equation}\label{eqn:proof_complexity_eq1}
    \frac{\|\hat{y}(0)\| + \|\Fzerogamma\|_{L_1(T-T_0,T)}}{\|\hat{y}(T)\|}  \le \mathcal{O}\Big(\frac{1}{\sqrt{1-R}} \frac{\|u(0)\| + \|F_0\|_{L_1(T-T_0,T)} }{ \|u(T)\|} \Big). 
\end{equation}
Similarly, we can also bound the truncation time 
\begin{equation*}
    T_0 = \mathcal{O}\left( \frac{1}{\eta} \log\left(\frac{ \|F_{0,\gamma}\| }{ \eta \|\hat{y}(T)\|\epsilon }\right) \right) = \widetilde{\mathcal{O}} \left( \frac{1}{\eta} \log\left( \frac{\|F_0\|}{\|u(T)\|\epsilon } \right) \right). 
\end{equation*}
Plugging these estimates in~\cref{eqn:proof_complexity_eq1,eqn:proof_complexity_eq2} of parameters back to~\cref{eqn:proof_complexity_eq3}, we can further bound the query complexity to $O_{F_i}$'s as 
\begin{align*}
    & \widetilde{\mathcal{O}}\left(  \frac{1}{\sqrt{1-R}} \frac{\|u(0)\| + \|F_0\|_{L_1(T-T_0,T)} }{ \|u(T)\|}   \frac{ N (\gamma \alpha_{F_0} + \alpha_{F_1} + \alpha_{F_2}/\gamma)  }{ (1-R) |\mu(F_1)| } \log^{3+o(1)}\left(\frac{1}{\epsilon}\right)\right) \nonumber \\
    = ~& \widetilde{\mathcal{O}}\left(  \frac{\|u(0)\| + \|F_0\|_{L_1(T-T_0,T)} }{ \|u(T)\|}   \frac{  \gamma \alpha_{F_0} + \alpha_{F_1} + \alpha_{F_2}/\gamma  }{ (1-R)^{5/2} |\mu(F_1)| } \log^{4+o(1)}\left(\frac{1}{\epsilon}\right)\right). 
\end{align*}

For the state preparation cost, according to~\cite[Appendix D.1]{YangOnwuntaAn2025}, we need $\mathcal{O}(  \frac{\|\hat{y}(0)\| + \|\Fzerogamma\|_{L_1(T-T_0,T)}}{\|\hat{y}(T)\|}  )$ queries to the state preparation oracle of $\hat{y}(0)$, which can be constructed using $\mathcal{O}(N)$ queries to $O_u$. 
So the overall query complexity to $O_u$ is bounded as 
\begin{align*}
    &\mathcal{O}\left( N \frac{\|\hat{y}(0)\| + \|\Fzerogamma\|_{L_1(T-T_0,T)}}{\|\hat{y}(T)\|}  \right) \nonumber \\
    =~& \mathcal{O}\left(\frac{1}{(1-R)^{3/2}} \frac{\|u(0)\| + \|F_0\|_{L_1(T-T_0,T)} }{ \|u(T)\|} \log\left( \frac{|\mu(F_1)|}{\|F_2\|\|u(T)\|\epsilon } \right) \right). 
\end{align*}

\end{proof}

\cref{thm:complexity} shows that the overall complexity of our approach does not explicitly depend on the overall evolution time $T$, while there are two implicit sources of the time dependence through $\|u(T)\|$ and $\|F_0\|_{L_1(T-T_0,T)}$. 
Dependence on the norm $\|u(T)\|$ is unavoidable due to the necessity of post selection in solving linearized ODE~\cite{AnLiuWangEtAl2022}. 
Additionally, $\|u(T)\|$ might be bounded from below independently of $T$ when the source term $F_0(t)$ does not vanish for large $T$. 
The $L^1$ norm $\|F_0\|_{L_1(T-T_0,T)}$ can be bounded independently of $T$ as $\|F_0\|_{L_1(T-T_0,T)} \leq T_0 \|F_0\| = \widetilde{\mathcal{O}} \left( \frac{\|F_0\|}{(1-R)|\mu(F_1)|} \log\left( \frac{1}{\|u(T)\|\epsilon } \right) \right)$.

\section{Numerical results}\label{sec:numerics}

In this section, we provide numerical results to validate our fast-forwarded quantum algorithm for nonlinear ODEs and explore its effectiveness beyond the theoretically guaranteed regime we have established. 
Specifically, we test three scenarios of weakly nonlinear ODEs: dissipative systems we have studied throughout the paper so far, conservative systems, and non-resonant systems. 
For each scenario, we study the convergence of the Carleman embedding by numerically testing
\begin{equation*}
    E_N:=\max_{t\in[0,T]}\|u(t)-\hat{y}_1^{[N]}(t)\|_2
\end{equation*}
for varying truncation order $N$, where $\hat{y}_1^{[N]}(t)$ denotes the first block of the $N$-th truncated Carleman system. 
In all the cases, the reference solutions of the nonlinear ODEs are computed numerically using the DOP853 integrator in \texttt{scipy}. 
We also study the possibility of applying the fast-forwarded quantum ODE algorithm to solve the linearized system, by numerically computing 
\begin{equation*}
    \delta_N:=-\lambda_{\max}\!\left(\frac{A_N+A_N^\dagger}{2}\right)
\end{equation*}
where $A_N$ is the truncated Carleman matrix at order $N$. 
If $\delta_N>0$, then the truncated lifted linear system is dissipative and can be solved with cost independently of $T$ as discussed earlier.

\subsection{Dissipative systems}

\subsubsection{Setting}
We consider~\cref{eqn:nonlinear_ODE} up to the final time $T = 1$ with
\begin{equation*}
    F_1=
\begin{bmatrix}
-1 & 0\\
0 & -2
\end{bmatrix},
\quad
F_2=f_2
\begin{bmatrix}
1 & 2 & 3 & 4\\
5 & 6 & 7 & 8
\end{bmatrix},
\quad
F_0=f_0
\begin{bmatrix}
\frac12\\
1
\end{bmatrix},
\quad
u(0) =
\begin{bmatrix}
1\\
1
\end{bmatrix}. 
\end{equation*}
The parameters are chosen as 
\begin{equation*}
    (f_2,f_0)\in
\left\{
(0.02,0.2),\,
(0.05,0),\,
(0.05,0.2),\,
(0.02,0),\,
(0.03,0.1),\,
(0.07,0.2)
\right\},
\end{equation*}
with the corresponding values
\begin{equation*}
    R\approx
\{0.56,\ 1.01,\ 1.16,\ 0.40,\ 0.68,\ 1.57\}.
\end{equation*}
Hence both $R < 1$ and $R > 1$ cases are included.

We emphasize that the numerical experiments in this subsection are performed directly on the original systems without rescaling. In particular, no rescaling transformation is applied to the lifted linear systems in the implementation. For each truncation order $N$, the Carleman approximation is obtained by solving the finite-dimensional truncated linear system constructed from the original coefficients $F_0,F_1,F_2$.

\subsubsection{Numerical observations}
Figure~\ref{fig:model1_error} shows the decay of the error $E_N$ as $N$ increases. 
One observes that, for all tested parameter choices, the truncation error decreases monotonically with respect to the Carleman truncation order. 
In particular, for the cases with $R<1$, the convergence is very rapid. 
The error drops by several orders of magnitude as $N$ increases from $1$ to $10$. 
This is fully consistent with the theoretical prediction.

More interestingly, the cases with $R>1$ also exhibit clear numerical convergence. In particular, even when $R\approx 1.01$, $1.16$, or $1.57$, the error still decreases as $N$ grows, although the decay rate becomes significantly slower than in the cases with $R<1$. This suggests that the condition $R<1$ should be viewed as a sufficient condition for Carleman convergence, but not as a sharp necessary threshold for this model.
Additionally, our results also suggest that the rescaling step seems not a necessary requirement for the convergence of the Carleman embedding, as our numerical tests are performed directly on the original equations with initial vector norm larger than $1$. 

Figure~\ref{fig:model1_dissipativity} displays the dissipativity margin $\delta_N$ for the same set of parameters. We observe that $\delta_N$ remains positive for all tested truncation orders and for all tested parameter choices. Hence, in every convergent case considered here (regardless of whether $R < 1$ or $R > 1$), the truncated Carleman linearized system is dissipative at the lifted level, and thus can be solved by the fast-forwarded quantum ODE algorithm. 
Moreover, the values of $\delta_N$ stay uniformly bounded away from zero, indicating that the dissipative structure is stable with respect to the truncation order.

\begin{figure}[t]
    \centering
    \includegraphics[width=0.75\linewidth]{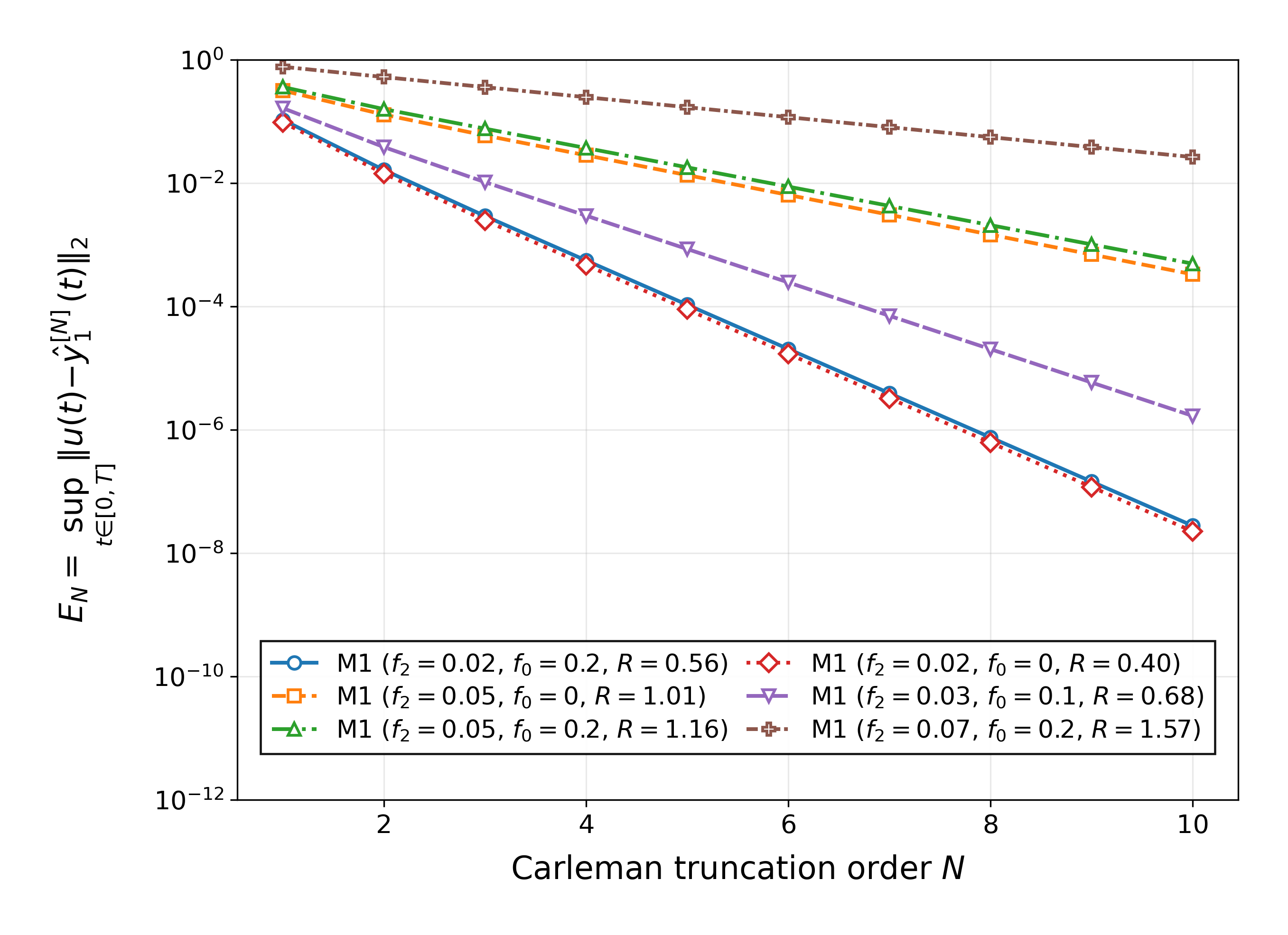}
    \caption{Error decay $E_N=\max_{t\in[0,T]}\|u(t)-\hat{y}_1^{[N]}(t)\|_2$ for dissipative systems under different choices of $(f_2,f_0)$. The plots show that Carleman convergence persists not only in the regime $R<1$, but also for several cases with $R>1$.}
    \label{fig:model1_error}
\end{figure}

\begin{figure}[t]
    \centering
    \includegraphics[width=0.75\linewidth]{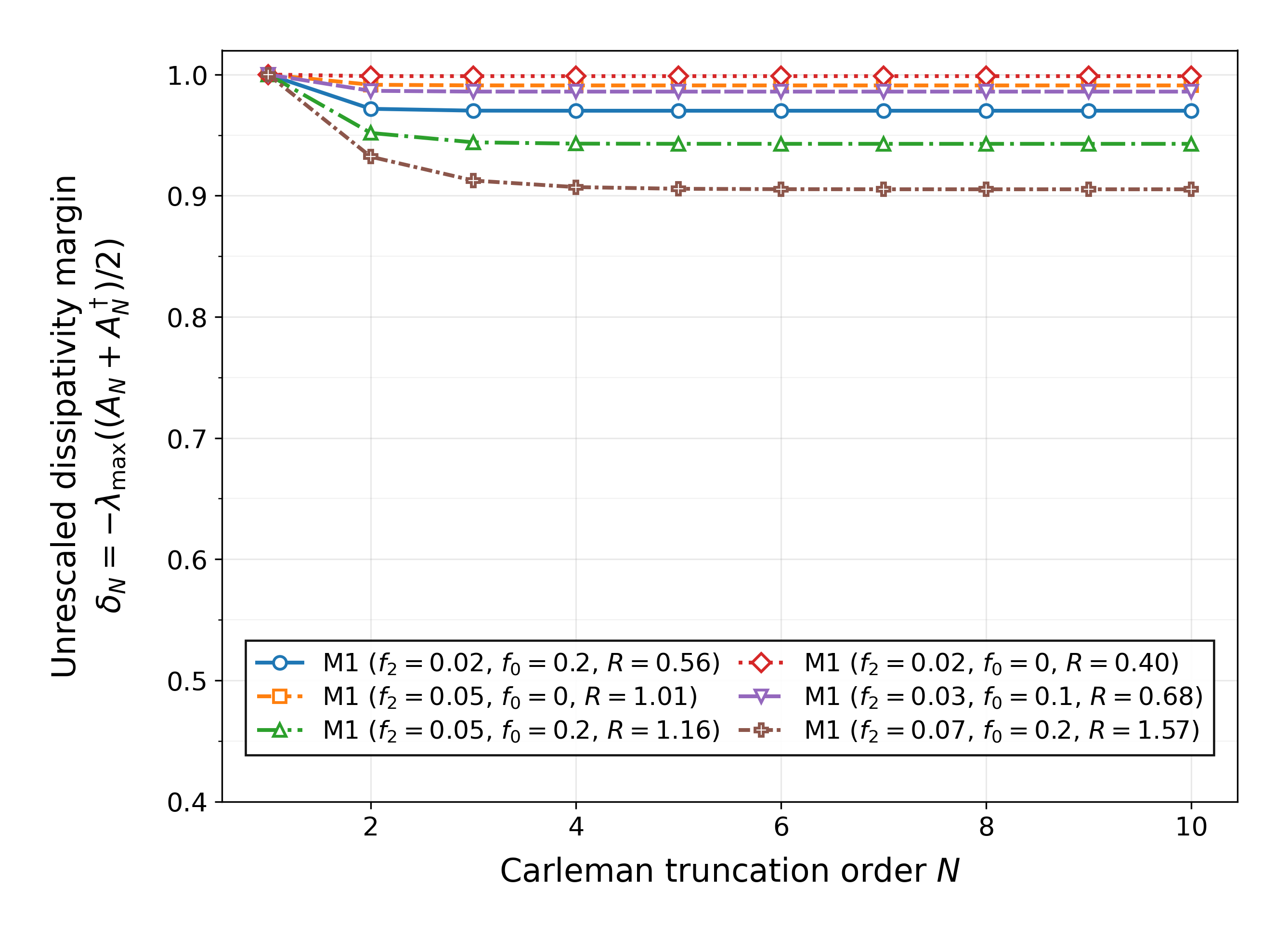}
    \caption{Dissipativity margin $\delta_N=-\lambda_{\max}((A_N+A_N^\dagger)/2)$ for dissipative systems. 
    }
    \label{fig:model1_dissipativity}
\end{figure}

\subsection{Conservative systems}

\subsubsection{Setting}
We consider the homogeneous case of~\cref{eqn:nonlinear_ODE} (i.e., without the $F_0(t)$ term) up to time $T = 3$ 
with
\begin{equation*}
    F_1=
\begin{bmatrix}
0 & 0\\
0 & -1
\end{bmatrix},
\qquad
F_2=
\begin{bmatrix}
0 & 0 & 0 & 0\\
a & 0 & 0 & -b
\end{bmatrix},
\qquad
u(0)=
\begin{bmatrix}
0.2\\
0.1
\end{bmatrix},
\end{equation*}
where $a=0.2, b=1.0$. 
Notice that the first entry of $u$ remains unchanged, and this is why such a system is called (partially) conservative system. 
When the nonlinearity is weak, the convergence of the Carleman embedding for conservative systems has been established in~\cite{JenningsKorzekwaLostaglioEtAl2025}. 

For the numerical reference solution, we again use the DOP853 integrator in \texttt{scipy}. 
In this model, the DOP853 reference can also be checked against the analytic solution. 
For the parameters used here, we obtain $\sup_t\|u_{\mathrm{DOP853}}(t)-u_{\mathrm{exact}}(t)\|_2 = 1.026\times 10^{-11}$,
which confirms that the DOP853 solution is sufficiently accurate as the reference solution for the Carleman truncation error.
\subsubsection{Numerical observations.}
Figure~\ref{fig:model2_error} shows the truncation error $E_N$ as a function of $N$. 
The error decreases rapidly with the truncation order, improving by several orders of magnitude already at small $N$, and reaching the reference-solver accuracy floor by $N=9$ or $N=10$. 
This indicates strong numerical convergence of the Carleman approximation for this model.

However, the dissipativity behavior is qualitatively different. 
As shown in Figure~\ref{fig:model2_dissipativity}, $\delta_1=0$, $\delta_N<0$ for all $N\ge 2$, and $\delta_N$ becomes more negative as $N$ increases. 
Thus, although the Carleman truncation accurately approximates the nonlinear solution, the associated truncated lifted linear system is not dissipative. 
Consequently, the fast-forwarded quantum ODE solver for dissipative linear systems does not apply to this example.

This conservative model therefore illustrates that Carleman convergence and dissipativity of the lifted system are distinct properties. 
The rapid decay of the truncation error reflects accurate approximation of the nonlinear trajectory, but it should not be interpreted as decay of the lifted linear flow itself. 

\begin{figure}[t]
    \centering
    \includegraphics[width=0.75\linewidth]{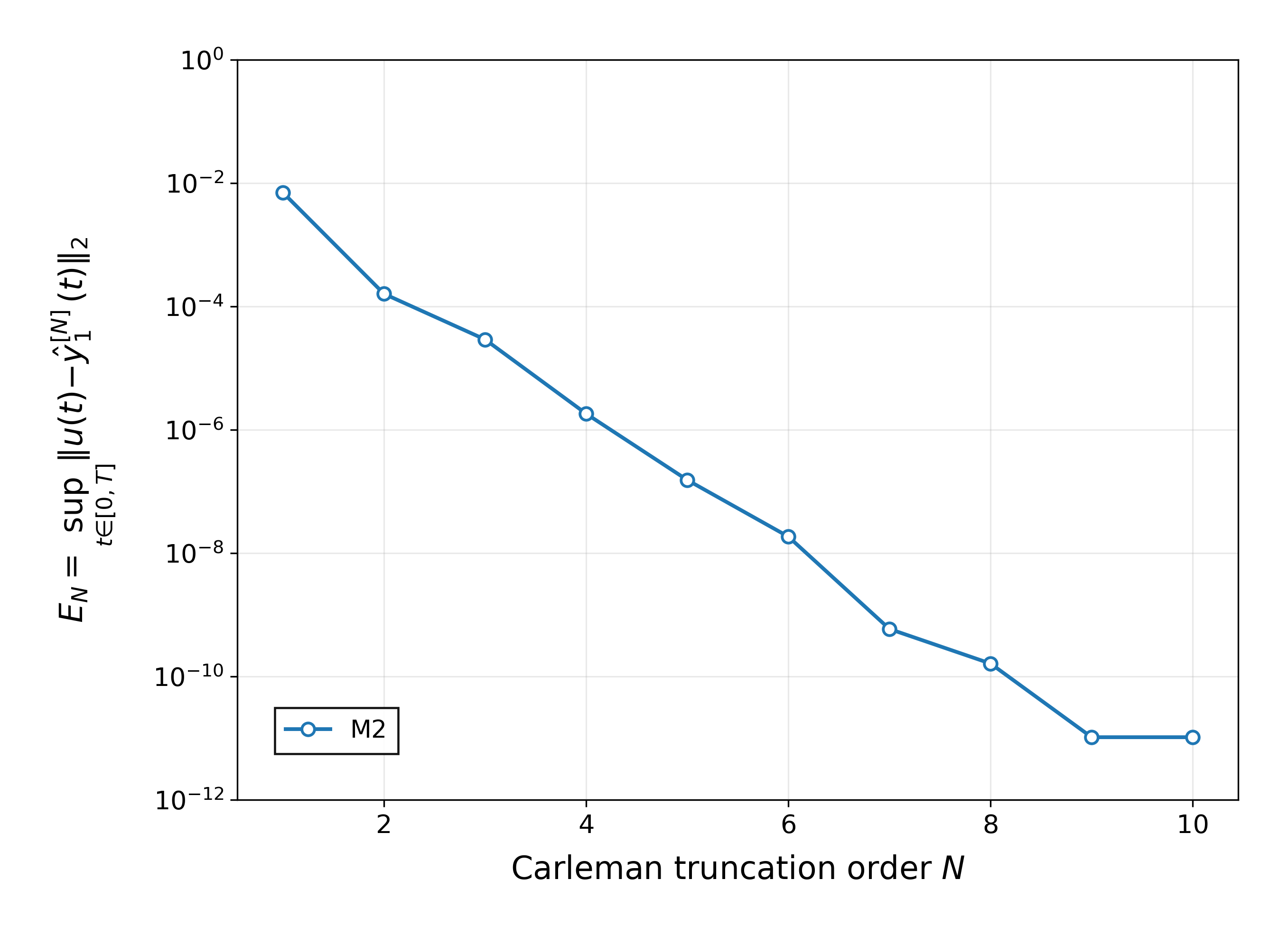}
    \caption{Error decay $E_N=\max_{t\in[0,T]}\|u(t)-\hat{y}_1^{[N]}(t)\|_2$ for Model~2. The truncation error decreases rapidly as the Carleman truncation order increases.}
    \label{fig:model2_error}
\end{figure}

\begin{figure}[t]
    \centering
    \includegraphics[width=0.75\linewidth]{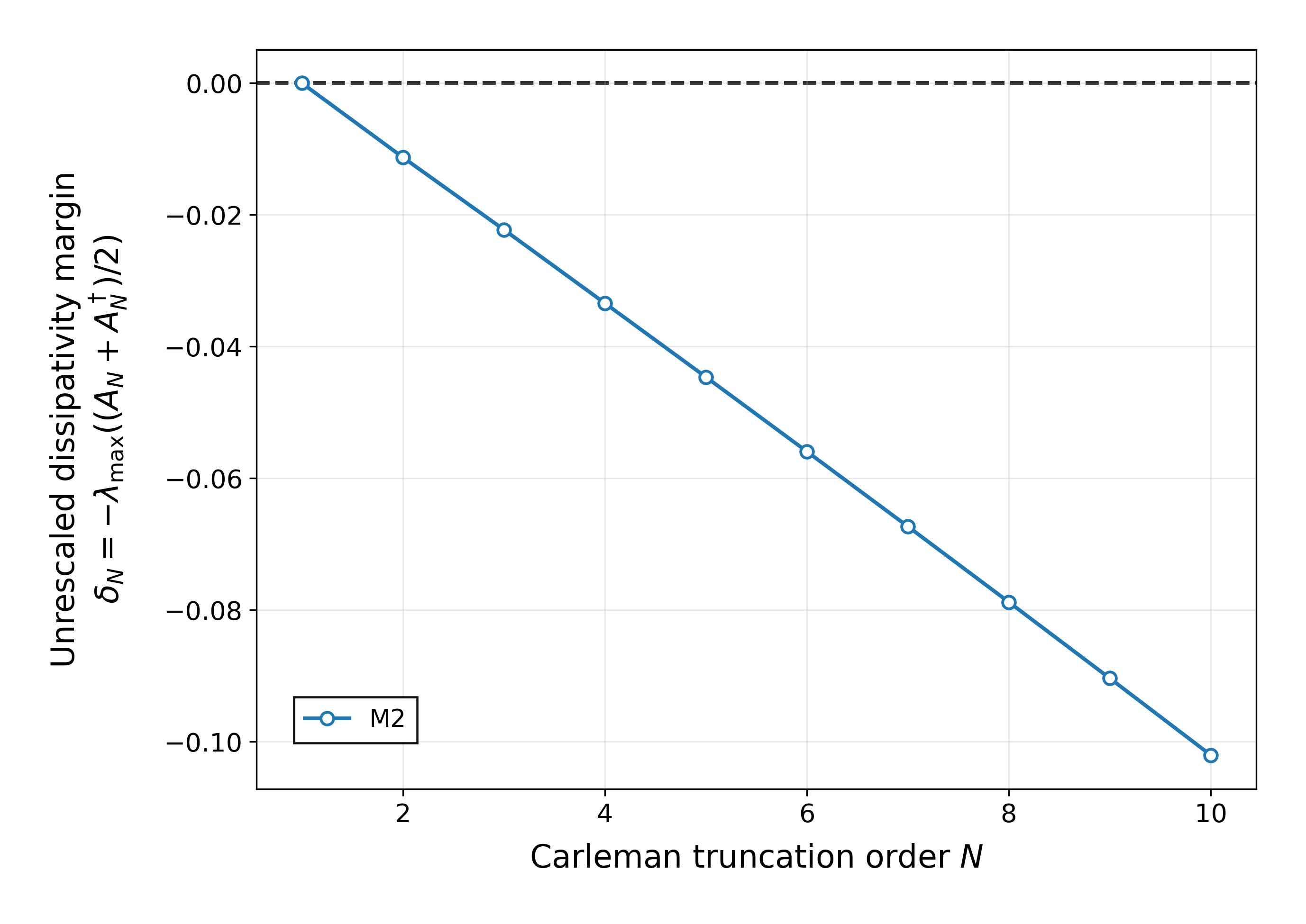}
    \caption{Dissipativity margin $\delta_N=-\lambda_{\max}((A_N+A_N^\dagger)/2)$ for Model~2. The margin is non-positive, showing that the truncated lifted linear system is not dissipative.}
    \label{fig:model2_dissipativity}
\end{figure}

\subsection{Non-resonant systems}

\subsubsection{Setting}
We consider the homogeneous case of~\cref{eqn:nonlinear_ODE} up to time $T = 3$ with 
\begin{equation*}
F_1=
\begin{bmatrix}
-i & 0\\
0 & -2
\end{bmatrix},
\qquad
F_2=f_2
\begin{bmatrix}
0 & 1 & 0 & 0\\
1 & 0 & 0 & -0.5
\end{bmatrix},
\qquad
u(0)=
\begin{bmatrix}
1\\
0.3
\end{bmatrix}. 
\end{equation*}
We vary $f_2\in\{1.0,\,1.5,\,0.3,\,0.5,\,0.4,\,1.2,\,0.1,\,0.2,\,0.09\}$, 
with corresponding values 
\begin{equation*}
    R_\Delta\approx\{14.61,\,29.61,\,3.20,\,5.71,\,4.40,\,19.73,\,1.04,\,2.09,\,0.94\}. 
\end{equation*}
For this non-resonant homogeneous setting, the reported quantity follows the definition in~\cite{JenningsKorzekwaLostaglioEtAl2025}. 
In the notation of that work, one diagonalizes the linear part as $F_1=Q\Lambda Q^{-1}$,
and writes the dynamics in the eigenbasis of $F_1$. 
The transformed quadratic coefficient is $\widetilde F_2=Q^{-1}F_2Q^{\otimes 2}$.
The corresponding non-resonant $R$-number is defined by
\begin{equation*}
    R_\Delta
    =
    \frac{
    8s\|\widetilde F_2\|_2
    \|\widetilde u_{\max}\|_2
    }{
    \Delta(F_1)
    },
    \qquad
    \|\widetilde u_{\max}\|_2
    :=
    \max_{t\in[0,T]}\|Q^{-1}u(t)\|_2,
\end{equation*}
where $s$ denotes the maximum column sparsity of $\widetilde F_2$. 
In the present example, $F_1$ is already diagonal, so $Q=I$, and therefore $\widetilde F_2=F_2, \widetilde u(t)=u(t)$. 
Moreover,
\begin{equation*}
    \Delta(F_1)
    =
    \min_i\inf_{|\alpha|\ge 2}
    \frac{
    \left|\lambda_i-\sum_j\alpha_j\lambda_j\right|
    }{
    |\alpha|-1
    }.
\end{equation*}
Here $\{\lambda_i\}$ are the eigenvalues of $F_1$. Under the non-resonance assumption, $R_\Delta<1$ corresponds to the guaranteed convergence regime.

\subsubsection{Numerical observations}
The error curves in Figure~\ref{fig:model3_error} reveal a clear dependence on the nonlinear strength $R_\Delta$. 
These observations show that the condition $R_\Delta<1$ is again sufficient but not necessary for convergence of Carleman embedding. 
Indeed, the case $R_\Delta\approx 0.94$ converges strongly, as predicted by theory, but several cases with
$R_\Delta>1$ also still converge numerically, including values around $1.04$, $2.09$, $3.20$, $4.40$, and $5.71$. 
However, once $R_\Delta$ becomes sufficiently large, the truncation quality deteriorates substantially and eventually fails in the tested range.

The dissipative behavior of the linearized system is shown in Figure~\ref{fig:model3_dissipativity}. In sharp contrast with the first scenario, here we observe that
$\delta_1=0$ and $\delta_N<0, N\ge 2$
for all tested values of $f_2$. Moreover, the negativity becomes stronger both as $N$ increases and as $f_2$ becomes larger. 
Hence the truncated lifted linear system fails to be dissipative for all tested cases. 
This shows that, in the non-resonant setting, convergence of the Carleman embedding can occur without decay of the truncated lifted flow. 
Therefore, for this model, non-resonance helps explain the observed approximation behavior, but it does not guarantee dissipativity of the linearized system.

\begin{figure}[t]
    \centering
    \includegraphics[width=0.75\linewidth]{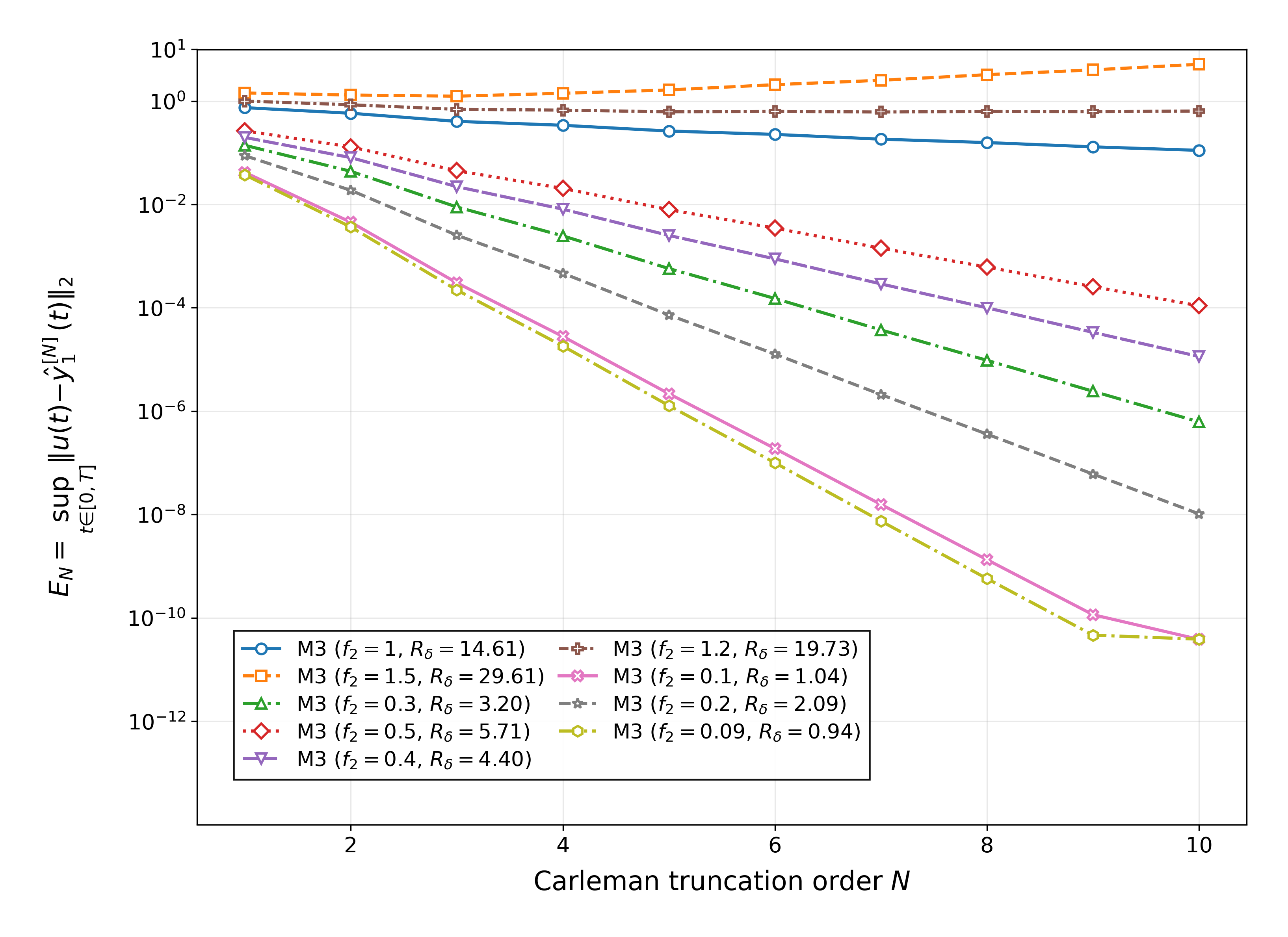}
    \caption{Error decay $E_N=\max_{t\in[0,T]}\|u(t)-\hat{y}_1^{[N]}(t)\|_2$ for Model~3 under different values of $f_2$. The results show good convergence for small and moderate nonlinear strength, while the approximation deteriorates for larger $f_2$.}
    \label{fig:model3_error}
\end{figure}

\begin{figure}[t]
    \centering
    \includegraphics[width=0.75\linewidth]{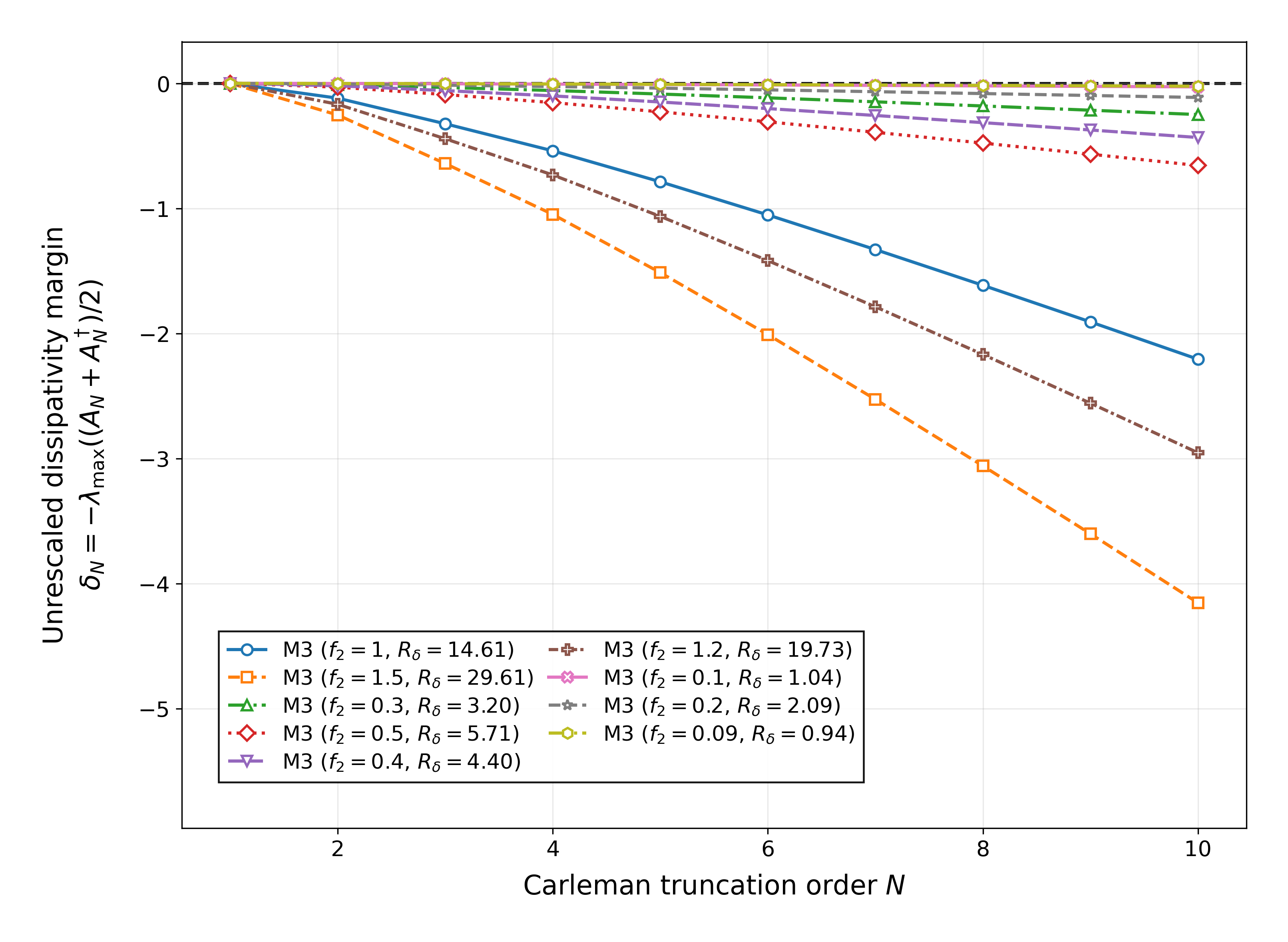}
    \caption{Dissipativity margin $\delta_N=-\lambda_{\max}((A_N+A_N^\dagger)/2)$ for Model~3. The margin is non-positive for all tested cases, indicating that the truncated lifted linear system is not dissipative.}
    \label{fig:model3_dissipativity}
\end{figure}

\subsubsection{Implication on fast-forwarding quantum algorithms}

As shown by our numerical results, the linearized ODEs of the non-resonant systems fail to be dissipative even if the Carleman embedding converges. 
Therefore, the fast-forwarded quantum algorithm described in the previous parts of this work does not apply. 
However, we would like to remark that fast-forwarding is still possible with a different quantum algorithm based on contour integral~\cite{TakahiraOhashiSogabeEtAl2020,TakahiraOhashiSogabeEtAl2021,JiangAn2026}, under the additional condition that $F_1$ and $F_2$ are time-independent and all the eigenvalues of $F_1$ have negative real parts. 
Below we describe the heuristics of the approach. 

Let $A$ denote the coefficient matrix of the linearized ODE. 
By Cauchy's integral theorem, we have $e^{AT} = \frac{1}{2\pi i} \oint_{\Gamma} e^{zT} (zI-A)^{-1} dz$. 
Here $\Gamma$ is a closed curve encompassing all the eigenvalues of $A$. 
Then $e^{AT}$ can be implemented as follows: we numerically discretize the contour integral to approximate $e^{AT}$ as a linear combination of $(zI-A)^{-1}$, implement each $(zI-A)^{-1}$ by quantum linear system solvers such as quantum singular value transformation~\cite{GilyenSuLowEtAl2019,ChakrabortyGilyenJeffery2018}, and compute the summation by the linear combination of unitaries (LCU) technique~\cite{ChildsWiebe2012}. 

When the nonlinear ODE is homogeneous, the coefficient matrix $A$ of the linearized system in~\cref{eqn:ODE_linearized} is block upper triangular. 
Furthermore, its eigenvalues are in the form of $\operatorname{spec}(A)=
\bigcup_{j=1}^{N}
\left\{\lambda_{k_1}+\cdots+\lambda_{k_j}:
\lambda_{k_\ell}\in\operatorname{spec}(F_1)\right\}$, which have negative real part. 
Thus we may choose the contour $\Gamma = \Gamma_1 \cup \Gamma_2$ where $\Gamma_1 = [-i (N \|A\|+1),i(N\|A\|+1) ]$ and $\Gamma_2 = \{z: |z| = N\|A\|+1,~\Re z \leq 0 \}$. 

The complexity of implementing $e^{AT}$ based on the contour integral consists of two parts. 
The first is the cost of implementing $(zI-A)^{-1}$, which depends on the condition number of the matrix $zI-A$ and thus independent of the evolution time. 
The second is the additional cost induced by LCU, which is roughly $\mathcal{O}\left( \oint_{\Gamma} |e^{zT}| \|(zI-A)^{-1}\| dz \right) $. 
This is again independent of the evolution time $T$, as $|e^{zT}| \leq 1$ for all $z \in \Gamma$. 
Therefore, the overall complexity does not explicitly depend on the evolution time, which achieves the fast-forwarding scaling.

\section{Conclusion}\label{sec:concludion}

In this paper, we propose a quantum algorithm for solving weakly nonlinear dissipative ODEs with complexity independent of the evolution time. 
Our algorithm is a combination of Carleman embedding for time-dependent coefficient matrices and the linear combination of Hamiltonian simulation algorithm for linearized ODEs with fast-forwarded scaling. 
We show the convergence of Carleman embedding for time-dependent coefficient matrices, rigorously analyze the performance of our approach, and simplifies the post-selection step in the algorithm. 
We also numerically study weakly nonlinear ODEs beyond the dissipative case and identify possibility for practical fast-forwarding. 

A natural future direction would be fast-forwarding quantum algorithms for broader types of nonlinear ODEs. 
We have briefly sketched the idea of applying the contour-integral-based quantum algorithm to fast-forward weakly nonlinear homogeneous non-resonant ODEs with time-independent coefficients. 
It is interesting to establish a complete rigorous performance analysis of such an approach, as well as extend it to inhomogeneous case. 
Extending this idea to non-resonant ODEs with time-dependent coefficients may require new techniques, as the Carleman convergence has not been shown in this case, and the contour-integral-based quantum algorithm in principle only applies to functions of time-independent matrices. 

\section*{Acknowledgments}

DA acknowledges funding from Quantum Science and Technology - National Science and Technology Major Project via Project 2024ZD0301900, and the support by The Fundamental Research Funds for the Central Universities, Peking University.

\bibliographystyle{unsrt}
\bibliography{ref}

@article{ZhangLiYuan2022,
  author  = {Zhang, Xiao-Ming and Li, Tongyang and Yuan, Xiao},
  title   = {Quantum state preparation with optimal circuit depth: Implementations and applications},
  journal = {Physical Review Letters},
  volume  = {129},
  number  = {23},
  pages   = {230504},
  year    = {2022},
  doi     = {10.1103/PhysRevLett.129.230504}
}

@article{McArdleGilyenBerta2026,
  author  = {McArdle, Sam and Gily{\'e}n, Andr{\'a}s and Berta, Mario},
  title   = {Quantum state preparation without coherent arithmetic},
  journal = {Physical Review Letters},
  volume  = {136},
  pages   = {240603},
  year    = {2026},
  doi     = {10.1103/ntvs-c48s}
}

@article{RosenkranzBrunnerMarinSanchezEtAl2025,
  author  = {Rosenkranz, Matthias and Brunner, Eric and Marin-Sanchez, Gabriel
             and Fitzpatrick, Nathan and Dilkes, Silas and Tang, Yao and
             Kikuchi, Yuta and Benedetti, Marcello},
  title   = {Quantum state preparation for multivariate functions},
  journal = {Quantum},
  volume  = {9},
  pages   = {1703},
  year    = {2025},
  doi     = {10.22331/q-2025-04-09-1703}
}

@article{CampsLinVanBeeumenEtAl2024,
  author  = {Camps, Daan and Lin, Lin and Van Beeumen, Roel and Yang, Chao},
  title   = {Explicit quantum circuits for block encodings of certain sparse matrices},
  journal = {SIAM Journal on Matrix Analysis and Applications},
  volume  = {45},
  number  = {1},
  pages   = {801--827},
  year    = {2024},
  doi     = {10.1137/22M1484298}
}

@article{SunderhaufCampbellCamps2024,
  author  = {S{\"u}nderhauf, Christoph and Campbell, Earl T. and Camps, Joan},
  title   = {Block-encoding structured matrices for data input in quantum computing},
  journal = {Quantum},
  volume  = {8},
  pages   = {1226},
  year    = {2024},
  doi     = {10.22331/q-2024-01-11-1226}
}

@article{HuangKuengPreskill2020,
  author  = {Huang, Hsin-Yuan and Kueng, Richard and Preskill, John},
  title   = {Predicting many properties of a quantum system from very few measurements},
  journal = {Nature Physics},
  volume  = {16},
  pages   = {1050--1057},
  year    = {2020},
  doi     = {10.1038/s41567-020-0932-7}
}

@article{ElbenFlammiaHuangEtAl2023,
  author  = {Elben, Andreas and Flammia, Steven T. and Huang, Hsin-Yuan and
             Kueng, Richard and Preskill, John and Vermersch, Beno{\^i}t and
             Zoller, Peter},
  title   = {The randomized measurement toolbox},
  journal = {Nature Reviews Physics},
  volume  = {5},
  pages   = {9--24},
  year    = {2023},
  doi     = {10.1038/s42254-022-00535-2}
}

@misc{AminiZhengSunEtAl2022,
      title={Carleman Linearization of Nonlinear Systems and Its Finite-Section Approximations}, 
      author={Arash Amini and Cong Zheng and Qiyu Sun and Nader Motee},
      year={2022},
      eprint={2207.07755},
      archivePrefix={arXiv},
      primaryClass={math.DS},
      url={https://arxiv.org/abs/2207.07755}, 
}

@misc{ForetsPouly2017,
      title={Explicit Error Bounds for Carleman Linearization}, 
      author={Marcelo Forets and Amaury Pouly},
      year={2017},
      eprint={1711.02552},
      archivePrefix={arXiv},
      primaryClass={math.NA},
      url={https://arxiv.org/abs/1711.02552}, 
}

@misc{HuJin2026,
      title={Quantum Simulation of Non-Unitary Dynamics via Amplitude-Phase Separation}, 
      author={Qitong Hu and Shi Jin},
      year={2026},
      eprint={2602.09575},
      archivePrefix={arXiv},
      primaryClass={quant-ph},
      url={https://arxiv.org/abs/2602.09575}, 
}

@article{Li2026,
  title = {From Linear Differential Equations to Unitaries: A Moment-Matching Dilation Framework with Near-Optimal Quantum Algorithms},
  author = {Li, Xiantao},
  journal = {PRX Quantum},
  volume = {7},
  issue = {2},
  pages = {020350},
  numpages = {20},
  year = {2026},
  month = {Jun},
  publisher = {American Physical Society},
  doi = {10.1103/xm61-ytf7},
  url = {https://link.aps.org/doi/10.1103/xm61-ytf7}
}

@article{TakahiraOhashiSogabeEtAl2021,
   title={Quantum algorithms based on the block-encoding framework for matrix functions by contour integrals},
   volume={20},
   ISSN={1533-7146},
   url={https://doi.org/10.26421/QIC22.11-12-4},
   DOI={10.26421/QIC22.11-12-4},
   number={11--12},
   journal={Quantum Information and Computation},
   publisher={Rinton Press},
   author={Takahira, Souichi and Ohashi, Asuka and Sogabe, Tomohiro and Usuda, Tsuyoshi S.},
   year={2021},
    pages={965–979} }

@misc{JiangAn2026,
      title={Contour-integral based quantum eigenvalue transformation: analysis and applications}, 
      author={Shan Jiang and Dong An},
      year={2026},
      eprint={2601.11959},
      archivePrefix={arXiv},
      primaryClass={quant-ph},
      url={https://arxiv.org/abs/2601.11959}, 
}

@misc{LowSomma2025,
      title={Optimal quantum simulation of linear non-unitary dynamics}, 
      author={Guang Hao Low and Rolando D. Somma},
      year={2025},
      eprint={2508.19238},
      archivePrefix={arXiv},
      primaryClass={quant-ph},
      url={https://arxiv.org/abs/2508.19238}, 
}

@article{ShangGuoAnZhao2025,
   title={Designing a Nearly Optimal Quantum Algorithm for Linear Differential Equations via Lindbladians},
   volume={135},
   ISSN={1079-7114},
   url={http://dx.doi.org/10.1103/cvl9-97qg},
   DOI={10.1103/cvl9-97qg},
   number={12},
   journal={Physical Review Letters},
   publisher={American Physical Society (APS)},
   author={Shang, Zhong-Xia and Guo, Naixu and An, Dong and Zhao, Qi},
   year={2025},
   month=Sept }

@misc{JinLiu2026HJB,
      title={Quantum algorithms for viscosity solutions to nonlinear Hamilton-Jacobi equations based on an entropy penalisation method}, 
      author={Shi Jin and Nana Liu},
      year={2026},
      eprint={2512.07919},
      archivePrefix={arXiv},
      primaryClass={quant-ph},
      url={https://arxiv.org/abs/2512.07919}, 
}

@misc{JinLiuMedvidovaYuan2026,
      title={Quantum algorithms for Young measures: applications to nonlinear partial differential equations}, 
      author={Shi Jin and Nana Liu and Maria Lukacova-Medvidova and Yuhuan Yuan},
      year={2026},
      eprint={2604.11825},
      archivePrefix={arXiv},
      primaryClass={quant-ph},
      url={https://arxiv.org/abs/2604.11825}, 
}

@misc{JenningsKorzekwaLostaglioWang2026,
      title={Quantum Koopman Algorithms}, 
      author={David Jennings and Kamil Korzekwa and Matteo Lostaglio and Guoming Wang},
      year={2026},
      eprint={2605.19054},
      archivePrefix={arXiv},
      primaryClass={quant-ph},
      url={https://arxiv.org/abs/2605.19054}, 
}

@misc{katzMuraleedharanAlase2025,
      title={Efficient quantum algorithm for solving differential equations with Fourier nonlinearity via Koopman linearization}, 
      author={Judd Katz and Gopikrishnan Muraleedharan and Abhijeet Alase},
      year={2025},
      eprint={2512.06488},
      archivePrefix={arXiv},
      primaryClass={quant-ph},
      url={https://arxiv.org/abs/2512.06488}, 
}

@article{Joseph2020,
   title={Koopman–von Neumann approach to quantum simulation of nonlinear classical dynamics},
   volume={2},
   ISSN={2643-1564},
   url={http://dx.doi.org/10.1103/PhysRevResearch.2.043102},
   DOI={10.1103/physrevresearch.2.043102},
   number={4},
   journal={Physical Review Research},
   publisher={American Physical Society (APS)},
   author={Joseph, Ilon},
   year={2020},
   month=Oct }

@misc{WangJiaVeerapaneniDing2026,
      title={Quantum Algorithms for Nonlinear Differential Equations via Pivot-Shifted Carleman Linearization}, 
      author={Ke Wang and Zikang Jia and Shravan Veerapaneni and Zhiyan Ding},
      year={2026},
      eprint={2605.20071},
      archivePrefix={arXiv},
      primaryClass={quant-ph},
      url={https://arxiv.org/abs/2605.20071}, 
}

@article{JinLiu2024,
   title={Quantum algorithms for nonlinear partial differential equations},
   volume={194},
   ISSN={0007-4497},
   url={http://dx.doi.org/10.1016/j.bulsci.2024.103457},
   DOI={10.1016/j.bulsci.2024.103457},
   journal={Bulletin des Sciences Mathématiques},
   publisher={Elsevier BV},
   author={Jin, Shi and Liu, Nana},
   year={2024},
   month=Sept, pages={103457} }

@article{CostaSchleichMoralesBerry2025,
   title={Further improving quantum algorithms for nonlinear differential equations via higher-order methods and rescaling},
   volume={11},
   ISSN={2056-6387},
   url={http://dx.doi.org/10.1038/s41534-025-01084-z},
   DOI={10.1038/s41534-025-01084-z},
   number={1},
   journal={npj Quantum Information},
   publisher={Springer Science and Business Media LLC},
   author={Costa, Pedro C. S. and Schleich, Philipp and Morales, Mauro E. S. and Berry, Dominic W.},
   year={2025},
   month=aug }

@misc{JenningsKorzekwaLostaglioEtAl2025,
      title={Quantum algorithms for general nonlinear dynamics based on the Carleman embedding}, 
      author={David Jennings and Kamil Korzekwa and Matteo Lostaglio and Andrew T Sornborger and Yigit Subasi and Guoming Wang},
      year={2025},
      eprint={2509.07155},
      archivePrefix={arXiv},
      primaryClass={quant-ph},
      url={https://arxiv.org/abs/2509.07155}, 
}

@misc{WuWangLi2025,
      title={Quantum Algorithms for Nonlinear Dynamics: Revisiting Carleman Linearization with No Dissipative Conditions}, 
      author={Hsuan-Cheng Wu and Jingyao Wang and Xiantao Li},
      year={2025},
      eprint={2405.12714},
      archivePrefix={arXiv},
      primaryClass={quant-ph},
      url={https://arxiv.org/abs/2405.12714}, 
}

@misc{YangOnwuntaAn2025,
      title={Quantum Differential Equation Solvers with Low State Preparation Cost: Eliminating the Time Dependence in Dissipative Equations}, 
      author={Gengzhi Yang and Akwum Onwunta and Dong An},
      year={2025},
      eprint={2508.15170},
      archivePrefix={arXiv},
      primaryClass={quant-ph},
      url={https://arxiv.org/abs/2508.15170}, 
}

@misc{JenningsLostaglioLowrieEtAl2024,
      title={The cost of solving linear differential equations on a quantum computer: fast-forwarding to explicit resource counts}, 
      author={David Jennings and Matteo Lostaglio and Robert B. Lowrie and Sam Pallister and Andrew T. Sornborger},
      year={2024},
      eprint={2309.07881},
      archivePrefix={arXiv},
      primaryClass={quant-ph},
      url={https://arxiv.org/abs/2309.07881}, 
}

@misc{AnChildsLin2023,
      title={Quantum algorithm for linear non-unitary dynamics with near-optimal dependence on all parameters}, 
      author={Dong An and Andrew M. Childs and Lin Lin},
      year={2023},
      eprint={2312.03916},
      archivePrefix={arXiv},
      primaryClass={quant-ph},
      url={https://arxiv.org/abs/2312.03916}, 
}

@misc{AnLiuWangEtAl2022,
  title={A theory of quantum differential equation solvers: limitations and fast-forwarding},
  author={An, Dong and Liu, Jin-Peng and Wang, Daochen and Zhao, Qi},
    archiveprefix = {arXiv},
  eprint        = {2211.05246},
  year={2022}
}

@article{JinLiuYu2024,
   title={Quantum Simulation of Partial Differential Equations via Schrödingerization},
   volume={133},
   ISSN={1079-7114},
   url={http://dx.doi.org/10.1103/PhysRevLett.133.230602},
   DOI={10.1103/physrevlett.133.230602},
   number={23},
   journal={Physical Review Letters},
   publisher={American Physical Society (APS)},
   author={Jin, Shi and Liu, Nana and Yu, Yue},
   year={2024},
   month=Dec }

@article{ChildsWiebe2012,
    author  = {Andrew M. Childs and Nathan Wiebe},
    title   = {Hamiltonian Simulation Using Linear Combinations of Unitary Operations},
    journal = {Quantum Information and Computation},
	doi = {10.26421/qic12.11-12},
	year = 2012,
	publisher = {Rinton Press},
	volume = {12},
    pages   = {901--924},
}

@Article{BrassardHoyerMoscaEtAl2002,
  author  = {Brassard, Gilles and Hoyer, Peter and Mosca, Michele and Tapp, Alain},
  title   = {Quantum amplitude amplification and estimation},
  journal = {Contemp. Math.},
  year    = {2002},
  volume  = {305},
  pages   = {53--74},
  doi = {10.1090/conm/305/05215},
}

@misc{BerryCosta2022,
      title={Quantum algorithm for time-dependent differential equations using Dyson series}, 
      author={Dominic W. Berry and Pedro C. S. Costa},
      year={2022},
      eprint={2212.03544},
      archivePrefix={arXiv},
      primaryClass={quant-ph}
}

@article{AnLiuLin2023,
      doi = {10.1103/physrevlett.131.150603},
  
	year = 2023,
	month = {10},
  
	publisher = {American Physical Society ({APS})},
  
	volume = {131},
  
	number = {15},
  
	author = {Dong An and Jin-Peng Liu and Lin Lin},
  
	title = {Linear Combination of Hamiltonian Simulation for Nonunitary Dynamics with Optimal State Preparation Cost},
  
	journal = {Physical Review Letters}
}

@misc{GroverRudolph2002,
  author = {Grover, Lov and Rudolph, Terry},
  title = {Creating superpositions that correspond to efficiently integrable probability distributions},
  year = {2002},
  journal = {}
}

@article{TakahiraOhashiSogabeEtAl2020,
	doi = {10.26421/qic20.1-2-2},
	year = 2020,
	publisher = {Rinton Press},
	volume = {20},
	number = {1{\&}2},
	pages = {14--36},
	author = {Souichi Takahira and Asuka Ohashi and Tomohiro Sogabe and Tsuyoshi S. Usuda},
	title = {Quantum algorithm for matrix functions by Cauchy{\textquotesingle}s integral formula},
	journal = {Quantum Information and Computation}
}

@article{FangLinTong2022,
   title={Time-marching based quantum solvers for time-dependent linear differential equations},
   volume={7},
   DOI={10.22331/q-2023-03-20-955},
   journal={Quantum},
   publisher={Verein zur Forderung des Open Access Publizierens in den Quantenwissenschaften},
   author={Fang, Di and Lin, Lin and Tong, Yu},
   year={2023},
   month=mar, pages={955} }

@article{GuSommaSahinoglu2021,
	doi = {10.22331/q-2021-11-15-577},
	year = 2021,
	publisher = {Verein zur Forderung des Open Access Publizierens in den Quantenwissenschaften},
	volume = {5},
	pages = {577},
	author = {Shouzhen Gu and Rolando D. Somma and Burak \c{S}ahino{\u{g}}lu},
	title = {Fast-forwarding quantum evolution},
	journal = {Quantum}
}

@article{LiuKoldenKroviEtAl2021,
	doi = {10.1073/pnas.2026805118},
	year = 2021,
	publisher = {Proceedings of the National Academy of Sciences},
	volume = {118},
	number = {35},
	author = {Jin-Peng Liu and Herman {\O}ie Kolden and Hari K. Krovi and Nuno F. Loureiro and Konstantina Trivisa and Andrew M. Childs},
	title = {Efficient quantum algorithm for dissipative nonlinear differential equations},
	journal = {Proceedings of the National Academy of Sciences}
}

@article{Krovi2022,
   title={Improved quantum algorithms for linear and nonlinear differential equations},
   volume={7},
   DOI={10.22331/q-2023-02-02-913},
   journal={Quantum},
   publisher={Verein zur Forderung des Open Access Publizierens in den Quantenwissenschaften},
   author={Krovi, Hari},
   year={2023},
   month=2, pages={913} }

@article{ChildsLiu2020,
	doi = {10.1007/s00220-020-03699-z},
	year = 2020,
	publisher = {Springer Science and Business Media {LLC}
},
	volume = {375},
	number = {2},
	pages = {1427--1457},
	author = {Andrew M. Childs and Jin-Peng Liu},
	title = {Quantum spectral methods for differential equations},
	journal = {Communications in Mathematical Physics}
}

@article{BerryChildsOstranderEtAl2017,
	doi = {10.1007/s00220-017-3002-y},
	year = 2017,
	publisher = {Springer Science and Business Media {LLC}
},
	volume = {356},
	number = {3},
	pages = {1057--1081},
	author = {Dominic W. Berry and Andrew M. Childs and Aaron Ostrander and Guoming Wang},
	title = {Quantum algorithm for linear differential equations with exponentially improved dependence on precision},
	journal = {Communications in Mathematical Physics}
}

@article{Berry2014,
	doi = {10.1088/1751-8113/47/10/105301},
	year = 2014,
	publisher = {{IOP} Publishing},
	volume = {47},
	number = {10},
	pages = {105301},
	author = {Dominic W. Berry},
	title = {High-order quantum algorithm for solving linear differential equations},
	journal = {Journal of Physics A: Mathematical and Theoretical}
}

@Article{BerryChildsCleveEtAl2015,
  Title                    = {Simulating {H}amiltonian dynamics with a truncated {T}aylor series},
  Author                   = {Dominic W. Berry and Andrew M. Childs and Richard Cleve and Robin Kothari and Rolando D. Somma},
  Journal                  = {Phys. Rev. Lett.},
  Year                     = {2015},
  Pages                    = {090502},
  Volume                   = {114}, 
  doi = {10.1103/PhysRevLett.114.090502}
}

@misc{LowWiebe2019,
   title={{H}amiltonian Simulation in the Interaction Picture}, 
      author={Guang Hao Low and Nathan Wiebe},
      year={2019},
      eprint={1805.00675},
      archivePrefix={arXiv},
      primaryClass={quant-ph}
}

@InProceedings{GilyenSuLowEtAl2019,
  author    = {Gily{\'e}n, Andr{\'a}s and Su, Yuan and Low, Guang Hao and Wiebe, Nathan},
  title     = {Quantum singular value transformation and beyond: exponential improvements for quantum matrix arithmetics},
  booktitle = {Proceedings of the 51st Annual ACM SIGACT Symposium on Theory of Computing},
  year      = {2019},
  pages     = {193--204},
  doi = {10.1145/3313276.3316366}
}

@InProceedings{ChakrabortyGilyenJeffery2018,
  author =	{Shantanav Chakraborty and Andr{\'a}s Gily{\'e}n and Stacey Jeffery},
  title =	{{The Power of Block-Encoded Matrix Powers: Improved Regression Techniques via Faster Hamiltonian Simulation}},
  booktitle =	{46th International Colloquium on Automata, Languages, and Programming (ICALP 2019)},
  pages =	{33:1--33:14},
  series =	{Leibniz International Proceedings in Informatics (LIPIcs)},
  ISBN =	{978-3-95977-109-2},
  ISSN =	{1868-8969},
  year =	{2019},
  volume =	{132},
  editor =	{Christel Baier and Ioannis Chatzigiannakis and Paola Flocchini and Stefano Leonardi},
  publisher =	{Schloss Dagstuhl--Leibniz-Zentrum fuer Informatik},
  address =	{Dagstuhl, Germany},
  URL =		{http://drops.dagstuhl.de/opus/volltexte/2019/10609},
  URN =		{urn:nbn:de:0030-drops-106092},
  doi =		{10.4230/LIPIcs.ICALP.2019.33}
}

@article{AtiaAharonov2017,
	author = {Atia, Yosi and Aharonov, Dorit},
	doi = {10.1038/s41467-017-01637-7},
	id = {Atia2017},
	journal = {Nature Communications},
	number = {1},
	pages = {1572},
	title = {Fast-forwarding of Hamiltonians and exponentially precise measurements},
	volume = {8},
	year = {2017}}

\appendix

\section{Bounds of the Carleman error}\label{app:Carleman_error_decay}

Here we show the proofs of the Carleman error, which mainly follow the strategy for the time-independent case established in~\cite{JenningsKorzekwaLostaglioEtAl2025} with suitable modifications for the time-dependent case and more explicit parameter dependence. 

\begin{proof}[Proof of~\cref{lem:matrix_dissipation}]
    
     Let $\nu = [\nu_1;\nu_2;\cdots;\nu_N]$ be an arbitrary normalized vector with the same partition as the Carleman embedding. 
     We can compute
    \begin{align*}
        \nu^\dag (A(t)^\dag + A(t)) \nu 
        & = \sum_{j=1}^{N}  \nu_j^\dag((A_{j}^{j})^\dag + A_{j}^{j}) \nu_j + \sum_{j=2}^{N} \big(\nu_{j-1}^\dag (A_{j-1}^{j})^{\dag} \nu_{j} + \nu_{j}^\dag A_{j-1}^j \nu_{j-1}\big) \nonumber \\
        & \quad + \sum_{j=1}^{N-1} \left(\nu_{j+1}^\dag (A_{j+1}^{j})^{\dag} \nu_{j} + \nu_{j}^\dag A_{j+1}^{j} \nu_{j+1}\right).
    \end{align*}
According to the definitions of $A_{i}^{j}$'s, we can bound each summation as 
\begin{equation*}
    \sum_{j=1}^{N}  \nu_j^\dag ((A_{j}^{j})^{\dag} + A_{j}^{j}) \nu_j(t)  
     \leq \sum_{j=1}^{N} 2j \mu(\Fonegamma)  \| \nu_j\|^2, 
\end{equation*}
\begin{equation*}
    \sum_{j=2}^{N}  \left( \nu_{j-1}^\dag (A_{j-1}^{j})^{\dag} \nu_{j} + \nu_{j}^\dag A_{j-1}^j \nu_{j-1} \right)
    \le \sum_{j=2}^{N} 2 j \| \Fzerogamma \| \| \nu_j\| \| \nu_{j-1}\|, 
\end{equation*}
\begin{equation*}
    \sum_{j=1}^{N-1}  \left(\nu_{j+1}^\dag (A_{j+1}^{j})^{\dag} \nu_{j} + \nu_{j}^\dag A_{j+1}^{j} \nu_{j+1}\right)
    \le \sum_{j=1}^{N-1} 2 j \| \Ftwogamma \| \| \nu_j\| \| \nu_{j+1}\|. 
\end{equation*}
   Define an $N$-dimensional vector $\nu_G$ with $\nu_{G,j} = \|\nu_j\|$ and an $N \times N$ matrix $G$ with non-zero entries $G_{j}^j = 2j\mu(\Fonegamma)$, $G_{j-1}^j = 2j \| \Fzerogamma \|$, $G_{j+1}^j = 2j \| \Ftwogamma \|$, we can rewrite the above bounds as $\nu^\dagger (A^\dagger + A) \nu \leq \nu_G^\dagger G \nu_G$. 
Note that $\nu_G^\dagger G^\dagger \nu_G  =  \nu_G^\dagger G \nu_G$, so we have 
\begin{equation}\label{eqn:proof_decay_A_HG}
    \nu^\dagger(A+A^\dagger)\nu \le \nu_G^\dagger \mathfrak{H}(G) \nu_G \le \lambda_1(\mathfrak{H}(G))\|\nu_G\|^2 = \lambda_1(\mathfrak{H}(G)), 
\end{equation}
where $\mathfrak{H}(G) = \frac{G+G^\dagger}{2}$ and $\lambda_1(\mathfrak{H}(G))$ denotes its largest eigenvalue. 

We then estimate the value of the $\lambda_1(\mathfrak{H}(G))$. 
Recall that all the non-zero entries of $\mathfrak{H}(G)$ are given by 
    \begin{align*}
        \mathfrak{H}(G)_{j-1}^j &= j\| \Fzerogamma \| +(j-1)\| \Ftwogamma \|, \quad 2 \leq j \leq N, \\
        \mathfrak{H}(G)_{j}^j &= 2 j \mu( \Fonegamma ), \quad 1 \leq j \leq N, \\
        \mathfrak{H}(G)_{j+1}^j &= (j+1)\| \Fzerogamma \| + j\| \Ftwogamma \|, \quad 1 \leq j \leq N-1. 
    \end{align*}
    By applying the Gershgorin's circle theorem, we have 
    \begin{equation*}
        \lambda_1(\mathfrak{H}(G))\le \max_{1 \leq j \leq N}(c_j+r_j), 
    \end{equation*}
    where
    \begin{align*}
        & c_j = 2j\mu( \Fonegamma ),\quad 1 \leq j \leq N, \\
        & r_j = (2j+1)\|\Fzerogamma\|+(2j-1)\|\Ftwogamma\|, \quad 2 \leq j \leq N-1,  \\ 
        & r_1 = 2\|\Fzerogamma\| + \|\Ftwogamma\|,\quad 
          r_N = N\|\Fzerogamma\|+(N-1)\|\Ftwogamma\|. 
    \end{align*}
    For convenience, we denote $a = \|{F}_2\|$, $b = -\mu({F}_1)$, $c = \|{F}_0\|$, $r = a+c-b$, and $x = \|u(0)\|$. 
    Based on the rescaling factor $\gamma$ introduced in the previous linearization procedure
    \begin{equation}\label{eqn:proof_decay_gamma_choice}
        \frac{a}{\gamma} + c\gamma =\frac{1+R}{2}b,\quad  \mu(\Fonegamma) = -b ,\quad \|\Fzerogamma\| = \gamma c, \quad \|\Ftwogamma \| = \frac{a}{\gamma}. 
    \end{equation}
    Then, since $R < 1$, we have 
    \begin{equation*}
        \overline{r} \coloneqq \frac{a}{\gamma} + c\gamma - b = \frac{R-1}{2}b < 0,  
    \end{equation*}
    and thus 
    \begin{align}
        \lambda_1(\mathfrak{H}(G)) &\le \max_{1 \leq j \leq N}(c_j+r_j) = \max\left\{ c_1+r_1, \max_{2 \leq j \leq N}(c_j+r_j) \right\} \nonumber \\
        & \le \max\left\{ 2\overline{r} - a/\gamma, \max_{2 \leq j \leq N}( 2j\overline{r} + c\gamma -a/\gamma ) \right\} \nonumber \\
        & = \max\left\{ 2\overline{r} - a/\gamma, 4\overline{r} + c\gamma -a/\gamma \right\}. \label{eqn:proof_decay_eig_HG}
    \end{align}
    By the definition of $R$ and the choice of $\gamma$ to be the smaller root of~\cref{eqn:proof_decay_gamma_choice}, we have 
    \begin{equation*}
        2\overline{r} - \frac{a}{\gamma} = \frac{a}{\gamma} + 2c\gamma - 2b \leq 2 \left( \frac{a}{\gamma} + c\gamma - b \right) = -(1-R)b, 
    \end{equation*}
    and
    \begin{equation*}
        4\overline{r} + c\gamma -a/\gamma = 3 \frac{a}{\gamma} + 5c\gamma - 4b \leq 4\left( \frac{a}{\gamma} + c\gamma - b \right) = -2(1-R)b, 
    \end{equation*}
    and~\cref{eqn:proof_decay_eig_HG} becomes 
    \begin{equation*}
        \lambda_1(\mathfrak{H}(G)) \leq -(1-R)b. 
    \end{equation*}
    Plugging this back to~\cref{eqn:proof_decay_A_HG} yields the desired bound for $A(t)$. 
\end{proof}

\begin{proof}[Proof of~\cref{lem:solution_norm_bound}]

    Following the notations in the proof of~\cref{lem:matrix_dissipation}, let 
        $a = \|{F}_2\|$, $b = -\mu({F}_1)$, $c = \|{F}_0\|$, $r = a+c-b$, $x = \|u(0)\|$. 
    According to the definition of $R$ and the choice of the rescaling factor in~\cref{eqn:choice_gamma}, we can write 
    \begin{equation*}
        \hat{\Gamma} +\frac{1}{\hat{\Gamma}} = B\frac{1+R}{2} \coloneqq B_1, \quad \hat{x} +\frac{1}{\hat{x}} = BR \coloneqq B_2, 
    \end{equation*}
    where 
    \begin{equation*}
        \hat{x} = \sqrt{\frac{a}{c}} \,x,\quad \hat{\Gamma} = \sqrt{\frac{a}{c}}\,\frac{1}{\gamma}, \quad B = \frac{b}{\sqrt{ac}}. 
    \end{equation*}
    Then we have 
    \begin{equation*}
        \|u_{\gamma}(0)\| = \gamma x = \frac{\hat{x}}{\hat{\Gamma}} \leq \frac{B_2 + \sqrt{ B_2^2 - 4 }}{B_1 + \sqrt{B_1^2 - 4}} < \frac{B_2}{B_1} = \frac{2R}{R+1}. 
    \end{equation*}

    To bound the solution norm at any time $t$, we first compute that 
    \begin{equation*}
        \frac{d\|u(t)\|}{dt} = \frac{1}{2\|u(t)\|}\frac{d\|u(t)\|^2}{dt} = \frac{1}{\|u(t)\|}\mathcal{R}e \langle \frac{du(t)}{dt}, u(t) \rangle
    \end{equation*}
    where $\langle \cdot, \cdot \rangle$ denotes the inner product of two (possibly unnormalized) vectors. 
    Then we have
    \begin{equation*}
        \frac{d\|u(t)\|}{dt} = \frac{1}{\|u(t)\|} \mathcal{R}e \langle F_2(t) u(t)^{\otimes 2} + F_1(t) u(t) + F_0(t), u(t) \rangle
    \end{equation*}
    Notice that 
    \begin{align*}
        \mathcal{R}e \langle F_2(t) u(t)^{\otimes 2} ,u(t) \rangle &\leq \|F_2(t) u(t)^{\otimes 2}\|\|u(t)\| \leq a\|u(t)\|^3, \\
        \mathcal{R}e \langle F_0(t) ,u(t) \rangle &\leq \|F_0(t)\|\|u(t)\| \leq c\|u(t)\|, \\
        \mathcal{R}e \langle F_1(t) u(t), u(t) \rangle &\leq -b \|u(t)\|^2, 
    \end{align*}
    so we have 
    \begin{equation}\label{eq:proof_solu_norm_du}
        \frac{d\|u(t)\|}{dt} \leq a\|u(t)\|^2 - b \|u(t)\| + c. 
    \end{equation}
    Consider the ODE
    \begin{equation}\label{eq:proof_solu_norm_dv}
        \frac{dv(t)}{dt} = a v(t)^2 - b v(t) + c, \quad v(0) = \|u(0)\|. 
    \end{equation}
    Since $R < 1$, which is equivalent to $a\|u(0)\|^2 - b \|u(0)\| + c < 0$, the quadratic equation $a y^2 - b y + c = 0$ must have two different roots denoted as $0<v_- <\|u(0)\|< v_+$. 
    Let $w(t) = v(t) - v_-$, then we have 
    \begin{equation*}
        \frac{dw(t)}{dt} = a w(t)(w(t) -(v_+ - v_-)), \quad 0< v(0) - v_- = w(0)  < v_+-v_-. 
    \end{equation*}
    We can analytically solve the above ODE and get
    
    \begin{equation*}
    w(t)
    =
    \frac{w(0)(v_+-v_-)}
    {e^{a(v_+-v_-)t}(v_+-v_-)
    -w(0)\bigl(e^{a(v_+-v_-)t}-1\bigr)}.
    \end{equation*}
    Since $0<w(0)<v_+-v_-\,,a>0\,,t\geq 0$ ,we have
    \begin{align*}
    &e^{a(v_+-v_-)t}(v_+-v_-)
    -w(0)\bigl(e^{a(v_+-v_-)t}-1\bigr)\\
    &
    =\bigl(e^{a(v_+-v_-)t}-1\bigr)\bigl(v_+-v_--w(0)\bigr) + v_+-v_-
    \geq v_+-v_-.
    \end{align*}
    Therefore, for $t \geq 0$, we have $0<w(t)\leq \frac{w(0)(v_+-v_-)}{v_+-v_-} =w(0)$, and thus $v(t) = w(t) + v_- \leq w(0) + v_- = v(0) = \|u(0)\|$. 
    According to the comparison theorem between ODEs~\cref{eq:proof_solu_norm_du} and~\cref{eq:proof_solu_norm_dv}, we have $\|u(t)\| \leq v(t) \leq \|u(0)\|$, and thus $\|\ugamma(t)\| \leq \|\ugamma(0)\|$. 
\end{proof}

\begin{proof}[Proof of~\cref{lem:Carleman_error_bound}]
    By Duhamel's principle, we write the error vector $\eta$ from~\cref{eq:error_evolution} as 
    \begin{equation*}
        \eta(t) = \int_0^t  \mathcal{T} e^{\int_s^t A(\tau) d\tau} \zeta(s)  ds. 
    \end{equation*}
    According to~\cref{lem:matrix_dissipation} and~\cref{lem:solution_norm_bound}, we can bound $\eta$ as 
    \begin{align*}
        \| \eta(t) \| &\leq \int_0^t \left\|\mathcal{T} e^{\int_s^t A(\tau) d\tau}\right\| \|\zeta(s)\| ds \leq \int_0^t e^{ \frac{(1-R)\mu(F_1)}{2} (t-s) } N \frac{\|F_2\|}{\gamma} \|\ugamma (s)\|^{N+1} ds \\
        & \leq \frac{2 N \|F_2\| }{-(1-R)\gamma \mu(F_1)} \left(\frac{2R}{1+R}\right)^{N+1} \leq \frac{ 2 N R }{1-R} \left(\frac{2R}{1+R}\right)^{N}, 
    \end{align*}
    where in the last inequality we use an estimate from~\cref{eqn:choice_gamma} that $-\gamma \mu(F_1) = 2(\|F_2\| + \|F_0\|\gamma^2)/(1+R) \geq 2\|F_2\|/(1+R)$. 
\end{proof}

\end{document}